\newcommand{\set}[3]{{#1}_{#2}^{#3}}
\newcommand{\init}{v_{0}}
\newcommand{\bigo}{\mathcal{O}}
\newcommand{\plays}[1]{plays(#1)}
\newcommand{\pre}[1]{pref(#1)}
\newcommand{\Dj}[1]{\textcolor{red}{#1}}
\title{Playing Against Opponents With Limited Memory} 
\titlerunning{Playing Against Opponents With Limited Memory}
\author{Dhananjay Raju}{The University of Texas at Austin, USA}{draju@cs.utexas.edu}{}{}
\author{R\"udiger Ehlers}{Clausthal University of Technology, Germany}{ruediger.ehlers@tu-clausthal.de }{}{}{}
\author{Ufuk Topcu}{The University of Texas at Austin, USA}{utopcu@utexas.edu}{}{}
\authorrunning{Raju, Ehlers and Topcu}
\keywords{Graph Games, Partial Information, Synthesis, Limited Memeory}
\begin{document}

\maketitle
\begin{abstract}
We study \emph{partial-information} two-player turn-based games on graphs with omega-regular objectives, when the partial-information player has \emph{limited memory}. 
Such games are a natural formalization for reactive synthesis when the environment player is not genuinely adversarial to the system player. The environment player has goals of its own, but the exact goal of the environment player is unknown to the system player. We prove that the problem of determining the existence of a winning strategy for the system player is PSPACE-hard for reachability, safety, and parity objectives. Moreover, when the environment player is memoryless, the problem is PSPACE-complete. However, it is simpler to decide if the environment player has a winning strategy; it is only NP-complete.  Additionally, we construct a game where the the partial-information player needs at least $\mathcal{O}(\sqrt{n})$ bits of memory to retain winning strategies in a game of size $\mathcal{O}(n)$.
\end{abstract}

\clearpage
\section{Introduction}

Reactive synthesis is the process of computing correct-by-construction implementations for reactive systems from their specifications. The  attractiveness of leaving the implementation of a system to a synthesis algorithm led to the application of reactive synthesis in a diverse set of domains such as hardware circuits, graphical user interfaces, and high-level robotic mission planning \cite{Harel1985,doyen_raskin_2011,Bloem2018}.

The core difficulty in reactive synthesis is the computation of how parts of a specification can be satisfied without the knowledge of the future input to the system. The reactive synthesis problem is typically reduced to solving a game between two players: a player corresponding to the system and a player corresponding to the environment. The specification in the synthesis problem is encoded through the winning condition for the system player.  There exists a solution to the synthesis problem if and only if the system player has a strategy to win this game.

Most specifications are \emph{realizable} only under certain assumptions on the behavior of the environment. These assumptions have to also be captured in the specification. For instance, an elevator can only move between floors if its doors are closed. If the doors are blocked all of the time, a specification to eventually reach another floor cannot be fulfilled. Correspondingly, making \emph{assumptions} about the environment player into the games constructed for synthesis is necessary to solve the synthesis problem. In practice, specifying these assumptions precisely is difficult.  
Furthermore, requiring the system player to satisfy the specification only if the assumptions hold creates an incentive for the system player to violate the assumptions actively. Raskin et al.~\cite{Hunter2017Feb}~have for instance addressed this problem by computing \emph{strategy profiles} for the environment and system players so that none of them has an incentive to deviate from their strategy.

Synthesis based on strategy profiles relies on the often unreasonable assumption that the precise goals of the environment player is available to the system player. Typically, only an approximate understanding of the potential environment behavior is available. As such these assumptions are frequently insufficient to guarantee the existence of a winning strategy for the system player.
Consider an example from the robotics domain, where a common problem is that a robot should operate in a workspace shared with a human without collision. The goal of the human, i.e.\ the environment player, is usually not known, and the human could indeed \emph{always} move in a way that the robot's path is blocked. In this case, by using the conventional game setting, we cannot compute \emph{any} winning strategy for the system player, and hence synthesizing a controller for the robot fails.
For successful synthesis, we need to restrict the environment's behavior without making assumptions about the environment's goals.

In this paper, we present a new class of games in order to address this seemingly self-contradicting requirement:  we restrict the abilities of the environment player without any prior knowledge on its goals.
Our starting point is the common idea that, while an environment may behave arbitrarily, it is not its goal to prevent the system from satisfying its requirements. Hence, its behavior should not depend on the internal state of the system (which is encoded into the state space of the game). From the perspective of the environment player, we are thus dealing with a \emph{partial-information} setting. 
Simultaneously, from the system's perspective, the behavior of the environment can be arbitrary, but not \emph{arbitrarily complex}. 
We restrict the complexity of the environment's behavior by requiring that the environment uses only \emph{limited memory}.


Interestingly, partial-information games against limited memory opponents have not been studied in the past, even though they are a natural formalization for unknown environment behavior in reactive synthesis. In the present paper, we not only define this new class of games and discuss its basic properties, but also analyze the complexity of solving such games. In this first work on such games, we assume that the environment player does not obtain an additional stimulus from outside of the modeled scenario, and thus it only reacts to the system player's actions. We show that, surprisingly, even for memoryless environment players, checking if the system player has a winning strategy is PSPACE-complete for safety, reachability and parity objectives. Moreover, for a unary encoding of the memory, the problem can be solved in PSPACE. However, checking if the environment player has a winning strategy is NP-complete. This difference in the complexity highlights the asymmetry between the two players in this new game setting.
Note that for a binary encoding of the memory size, the upper bound on the amount of memory needed by a winning partial-information player can be concisely encoded, which means that the EXPTIME-hardness of incomplete-information game solving directly carries over to our setting \cite{Berwanger}.

We explore a new type of information asymmetry between the two players of a game: the environment player has limited information on the current state of the game, and the system player has no initial information on the goal of the environment player. The system player needs to learn from the environment's behavior \emph{while} avoiding situations in which the unknown behavior of the environment player could lead to the loss of the game for the system player \cite{Chakraborty2014Mar}. As such, we believe that our work prepares the ground for more formal approaches to the design of \emph{self-adaptive} systems that are able to achieve reasonable performance in completely unknown environments. In the future, we plan to research how to efficiently computes strategies for the system player in this new class of games.


\subsubsection*{Results, outline and related work.}

We study the problem of determining whether the partial-information player $P$ or the full-information player $F$ have a winning strategy in a game $G$. 
For $\mathcal{P} \in \{P, F\}$, let 
\begin{align*}
    \text{WIN}_\mathcal{P} = \{~ &\langle G \rangle~|~ G \text{ is a game arena and player } \mathcal{P} \text{ has a winning strategy}~\}.
\end{align*}

The games in the \emph{limited-memory} setting are \emph{not determined}, witnessed by the fact that the complexity of determining the existence of a winning strategy for a given game is different for each player as shown in Table \ref{tab:results}. The case in which both players have unlimited memory has been analyzed in \cite{Reif,kris,Berwanger} for different objectives. 
We study the WIN$_{\mathcal P}$ problem under reachability, safety, and parity objectives.
The complexity results are summarized in Table \ref{tab:results}. Additionally, we establish a lower bound on the memory required by the partial-information player to retain its winning strategies (an upper bound can be obtained from \cite{kris}). More specifically, we construct a game where the partial-information player needs at least $\mathcal{O}(\sqrt{n})$ bits of memory to preserve a winning strategy in a game of size $\mathcal{O}(n)$.  

\begin{table}
\centering
\caption{The complexity of determining if there exists a winning strategy.}
\label{tab:results}
\begin{tabular}{lll} 
\hline
Player $P$ & WIN$_P$  & WIN$_F$       \\ 
\hline
Memoryless   & NP-complete & PSPACE-complete  \\
Limited-memory    & NP-complete     & PSPACE-hard      \\
\hline
\end{tabular}
\end{table}

Next, we outline the main ideas in the case of a memoryless partial-information player with a reachability objective. 
To show the NP-completeness result of the WIN$_P$ problem,
we first observe that, given a strategy for the partial-information player, it is easy to check if it is winning. For proving hardness, we reduce the SAT problem to WIN$_P$.
It is harder for the full-information player to determine the existence of strategies. More specifically, this problem is PSPACE-complete. To prove that WIN$_F \in\mathit{PSPACE}$, we reduce an arbitrary instance of a game in our setting to a QBF formula with at most $\mathcal{O}(n^2)$ quantifier alternations, such that the full-information player has a winning strategy if and only if the formula is satisfiable.  For the hardness result, we define a game with an \emph{exit} action, which makes the game stop abruptly. If the exit action is played in the initial stage, the full-information player directly wins. If that happens after the initial stage, the partial-information player wins. Moreover, the partial-information player can play the exit action any step. Therefore, the full-information player has to ensure that\Dj{,} in the memoryless strategy played by the partial-information player, no state previously not visited is visited
after the initial stage, since the partial-information player can choose the exit action in such a case. 

We provide a reduction from QBF to this game, to prove PSPACE-hardness. Thus, we can encode the satisfaction of a QBF instance into the game graph. The two players initially build an assignment to the QBF variables sequentially, which the partial-information player can see. Afterward, the full-information player loses if the partial-information player takes the exit move. The only way for the full-information player to avoid losing is if it does not output a variable assignment part for which it has not previously observed that the partial-information player will not respond with exit.  By this manner, the same assignment has to be played repeatedly by both players. By taking the product of this game with an encoding of all clauses in a QBF formula, we obtain the required reduction from a QBF formula, as we know that the partial-information player will always choose the same values if the other player does, and the other player will need to do so to avoid an exit.

The idea behind the construction is that the full-information player cannot know that the assignment selected does not happen to be the only one not causing an exit; hence it has to assume that the partial-information player can play the exit move and hence is subsequently forced to solve the QBF problem to find an initial assignment for which it will not lose the game. 
Additionally, we show that the hardness carries over to all interesting objectives and also to the case when the partial-information player is forgetful.

For obtaining the lower bound on the memory required by the partial-information player to retain winning strategies, we use the idea of succinct counters from \cite{Berwanger} to encode prime numbers $\leq 2^n$ in a game graph of size $\mathcal{O}(n^2)$. The full-information player chooses a prime number randomly, and then the partial-information player has to guess this number correctly to win the game. When the partial-information player has unlimited memory, the constructed game is always winning for him. However, when we consider the limited memory scenario, the partial information player requires at least $n\#$ (primorial) distinct states in the memory.

\subparagraph*{Related work.}
In most straight forward setting of two-player games on graphs, the players are assumed to have perfect information about the state of the game. This setting has been thoroughly researched in \cite{Martin1975Sep,Emerson1993Jun,Thomas2002Jul,Henzinger2007Jan}. Most recently, quasi-polynomial algorithms for parity are given in \cite{Calude2017Jun,Lehtinen2018Jul}. 
The imperfect information games, where the players have asymmetric  knowledge about the state of the game has been studied in \cite{Reif,kris,Berwanger}. This setting is particularly useful in the synthesis of controllers that gather information about the state of the game using sensors.
In \cite{Chatterjee2005Dec} a class of semiperfect-information games, where one player has imperfect information and the other player has perfect information, is studied. This class is more straightforward than the games studied in \cite{kris}, and it can be solved in NP$\cap$coNP for parity objectives. However, in this setting, the ambiguity on the current state is not carried over.

\section{Definitions}

\subsubsection*{Arenas.}
We study two-player games in which each player chooses, in turn, a symbol from an alphabet. 
An infinite word is obtained as a result of the play. 
The game is played in an \emph{arena} $A = \langle V,\Sigma_P,\Sigma_F,\Delta, \init \rangle$ between a partial-information player $P$ and a full-information player $F$, where 
\begin{itemize}
    \item $V = V^{public} \times V^{private}$ is the state space;
    \item $\Sigma_P$ and $\Sigma_F$ are the sets of actions available to the respective players. Moreover, the actions of player $F$ have two components i) a public component $\Sigma_F^{public}$, and ii) a private component $\Sigma_F^{private}$, i.e., $\Sigma_F = \Sigma_F^{public} \times \Sigma_F^{private}$;
    \item $\Delta: V \times \Sigma_P \times \Sigma_F \to V$ is the transition function; and
    \item $\init$ is the initial state of the game.
\end{itemize}

\noindent The two players have asymmetric capabilities. The \emph{public component} of the state space $V^{public}$ is visible to both players. 
The \emph{private component} $V^{private}$ is not visible to player $P$. The actions of the full-information player are only partially visible to the partial-information player.

\subsubsection*{Strategies.}

In every step of the game Player $P$ first makes a move from $\Sigma_P$. Player $F$ responds to player $P$'s action by making a move from $\Sigma_F$. A strategy function for a player depends on the memory available to it and the information it has about the state of the game. If the next move of a player depends only the current state then the corresponding strategy is \emph{memoryless}. 
Memoryless strategies are a special case of limited-memory strategies. Formally, the strategy for the partial-information player $P$ is a function $S_P: V^{public} \times M \to \Sigma_P \times M$, where $M = \{1,\dots,k\}$. Player $P$ is memoryless if $k = 1$. The strategy $S_F$ for the full-information player is a function $S_F: V^+ \to \Sigma_F$.

\subsubsection*{Plays and Outcomes.}
A \emph{play} $\pi = \init (\sigma_1^P,\sigma_1^F)v_1(\sigma_2^P,\sigma_2^F)v_2 \dots$ is an alternating sequence of positions and action pairs, such that for all $i$, there is a valid transition between $v_i$ and $v_{i+1}$. 
The \emph{prefix} upto $v_n$ of $\pi$ is denoted $\pi(n)$. Its \emph{length} denoted $|\pi(n)|$ is $n$ and its \emph{last element} is $last(\pi(n)) = v_n$. The set of infinite plays in $A$ is denoted $\plays{A}$, and the set of corresponding finite prefixes is denoted $\pre{A}$.
The outcome of two strategies $S_P$ (for player $P$) and $S_F$ (for player $F$) in $A$ is the play $\pi = \init (\sigma_1^P,\sigma_1^F) v_1  (\sigma_2^P,\sigma_2^F)v_2 \dots \in \plays{A}$, such that for all $i$, 
a) $\sigma_i^P = S_P(v_{i-1},M)\downarrow1$ \footnote{$\downarrow i$ is the projection to the ith component},
b) $\sigma_i^F = S_F(v_0,\dots,v_{i-1},\sigma_i^P)$, and
c) $\Delta(v_i,(\sigma_i^P,\sigma_i^F))=v_{i+1}$.
This play is denoted $outcome(A,S_P,S_F)$.

\subsubsection*{Objectives.}

An \emph{objective} $\phi \subseteq V^{\omega}$ for arena $A$ is a set of infinite sequences of states. A play $\pi = \init (\sigma_1^P,\sigma_1^F) v_1 (\sigma_2^P,\sigma_2^F) v_2$ satisfies the objective $\phi$, denoted $\pi \models \phi$, if $\init v_1 v_2 \dots \in \phi$. We specifically consider reachability, safety, and parity objectives. For a play $\pi=\init (\sigma_1^P,\sigma_1^F) v_1 (\sigma_2^P,\sigma_2^F)\ldots$, we write $Inf(\pi)$ for the set of states that appear infinitely often in $\pi$. 
\begin{itemize}
    \item Reachability: Given a set $T \subseteq V$ of target states, the \emph{reachability} objective $Reach(T)$ requires that a state in $T$ is visited at least once, i.e., $Reach(T) = \{\init  v_1 v_2 \dots \in V^\omega~|~\exists k \geq 0: v_k \in T\}.$ 
    \item Safety: Dually, the \emph{safety} objective $Safe(T)$ requires that only states in $T$ are visited. Formally, $Safe(T) = \{~\init v_1 v_2 \dots \in V^\omega~|~ \forall k \in \mathbb{N}:  v_k \in T\} $.
    \item Parity: For $d \in \mathbb{N}$, let $p: V \to \{0,1,\dots,d\}$ be a priority function, which maps each state to a non negative integer priority. The \emph{parity objective} $Parity(p)$ requires that the minimum priority that appears infinitely often \Dj{is} even. Formally, $Parity(p) = \{\pi \in V^\omega|\min\{p(v)|v \in Inf(\pi)\} \text{ is even}\}$.
\end{itemize}

\subsubsection*{Games.}
A strategy $\alpha$ for a player with objective $\phi$ is \emph{winning} in $A$, if for every strategy $\beta$ of its opponent, if $\pi \in outcome(A,\alpha,\beta)$, then $\pi \models \phi$. A \emph{game} $G$ is a pair $(A, \phi)$, where $A$ is an arena as above, and $\phi$ is an objective. An objective $\phi$ for player $i$ is denoted $\phi_i$. 

\section{Memoryless partial-information player}

First, we consider the scenario when the partial-information player is memoryless. The results proved for a memoryless partial-information player are then extended to the forgetful case in the next section. We prove that the problem of determining the existence of winning strategies for the partial-information player is NP-complete irrespective of the objective.

\paragraph*{WIN$_P$ is NP-complete.}

Given a memoryless strategy for $P$, we can verify in polynomial time if it is winning. The following lemma is a direct consequence of this fact and is stated without proof.
\begin{lemma}
In a partial-information game with a memoryless partial-information player with parity objective, WIN$_P \in NP$. 
\label{winp:hard}
\end{lemma}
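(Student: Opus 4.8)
The plan is to establish membership in NP via the standard guess-and-verify template: nondeterministically guess a candidate memoryless strategy for $P$, and then verify in deterministic polynomial time that it is winning. Since $P$ is memoryless (i.e.\ $k=1$), a strategy is simply a function $S_P : V^{public} \to \Sigma_P$, whose description has size polynomial in $|V^{public}| \cdot \log|\Sigma_P|$; hence such a strategy can be guessed by an NP machine. The entire difficulty is therefore concentrated in the verification step, which is why the paper can state the result ``without proof'' once this step is seen to be polynomial.

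For the verification I would fix the guessed $S_P$ and argue that what remains is a one-player game controlled entirely by $F$. Concretely, once $S_P$ is fixed, in any state $v = (v^{pub}, v^{priv}) \in V$ the action of $P$ is determined as $\sigma^P = S_P(v^{pub})$, so the only remaining choice at $v$ is $F$'s action $\sigma^F \in \Sigma_F$, which selects the successor $\Delta(v,(\sigma^P,\sigma^F))$. This induces a finite directed graph $G_{S_P}$ on the state space $V$, with an edge $v \to v'$ whenever some $\sigma^F$ realizes that transition; the graph has at most $|V| \cdot |\Sigma_F|$ edges and is thus of polynomial size. Every play consistent with $S_P$ corresponds to an infinite path in $G_{S_P}$ starting at $\init$, and conversely $F$, having full information and unlimited memory, can realize any such path.

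It then remains to decide whether all infinite paths from $\init$ in $G_{S_P}$ satisfy $Parity(p)$. Equivalently, $S_P$ fails to be winning iff $F$ can steer the play into a path violating the parity condition, i.e.\ a path whose minimum infinitely-occurring priority is odd. This is exactly the question of who wins a one-player parity game in which $F$ owns every vertex, and such games are solvable in polynomial time: $F$ can force a violating play from $\init$ iff $\init$ can reach a cycle in $G_{S_P}$ whose minimum priority is odd. I would check this by computing the reachable subgraph, decomposing it into strongly connected components, and testing in each nontrivial SCC whether the minimum priority over its vertices is odd, all of which run in polynomial time. The NP machine accepts iff no such reachable odd cycle exists. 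Because reachability and safety objectives are captured by parity objectives, this argument covers all three cases stated in the lemma.

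The one point requiring care, and the one I would highlight as the main (mild) obstacle, is that the verification stays polynomial \emph{despite} $F$ having unlimited memory: the reduction to a one-player parity game is precisely what renders $F$'s unbounded memory irrelevant, since positional strategies already suffice for the sole player there. I would also make explicit that fixing a strategy defined on $V^{public}$ nevertheless yields a well-defined graph on the full product state space $V$, so that the ambiguity $P$ has about $V^{private}$ is correctly reflected as additional nondeterminism available to $F$ within $G_{S_P}$.
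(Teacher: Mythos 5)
Your overall route is exactly the one the paper intends (the paper states this lemma without proof, justifying it only by the observation that a guessed memoryless strategy can be checked in polynomial time): guess $S_P\colon V^{public}\to\Sigma_P$, fix it, note that the residual game is a one-player graph problem for $F$ on a polynomial-size graph $G_{S_P}$ over $V$, and conclude that $S_P$ is winning iff no infinite path from $\init$ violates the parity condition, equivalently iff $\init$ cannot reach a cycle whose minimum priority is odd. All of this, including the remark that $F$'s unbounded memory becomes irrelevant once only one player moves, is correct.

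However, the concrete verification algorithm you give is flawed: it is not true that a reachable cycle with odd minimum priority exists iff some reachable nontrivial SCC has odd minimum priority over \emph{all} its vertices. Your test is sound when it fires, but it misses violations. Counterexample: states $a,b,c$ with priorities $p(a)=0$, $p(b)=p(c)=1$ and edges $a\to b$, $b\to a$, $b\to c$, $c\to b$. The graph is a single SCC whose minimum priority is $0$ (even), so your verifier declares the strategy winning; yet the cycle $b\to c\to b$ has minimum priority $1$, so $F$ defeats $S_P$ by eventually looping there. Since your NP machine accepts exactly when the SCC test finds nothing, it would accept certificates for games in which $P$ has no winning memoryless strategy, and NP membership fails. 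The repair is standard and stays polynomial: for each odd priority $d$, restrict $G_{S_P}$ to the vertices of priority at least $d$, compute the SCCs of the restricted graph, and declare $S_P$ losing iff for some odd $d$ one of these SCCs is nontrivial (or carries a self-loop), contains a vertex of priority exactly $d$, and is reachable from $\init$ in $G_{S_P}$; equivalently, recurse inside each SCC after deleting its minimal even-priority vertices. With this correction your argument is complete and coincides with the paper's intended one-sentence justification.
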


To show the NP-hardness of the WIN$_P$ problem, we only need to consider reachability and safety goals for player $P$. We first construct an arena corresponding to any boolean formula in $CNF$. Following this construction, we show how to encode this CNF formula using the acceptance conditions.

For any propositional formula $\phi = C_1 \land C_2 \land \ldots \land C_m$ in CNF over a set $V_1 = \{x_1,\dots,x_n\}$ of propositional variables, we construct an arena $A_\phi$ of size at most  $\mathcal{O}(mn)$. The game arena is constructed in a way such that Player $P$ wins if the game reaches a state corresponding to satisfaction of all the clauses. Simultaneously, the structure of the arena also ensures that the only way for player $P$ to ensure safety (defined later) is by producing a satisfying assignment for $\phi$.
Player $P$ chooses assignments for the propositional variables once for every clause. Since $P$ is memoryless, its assignments are the same for every clause. It has to find an assignment such that all the clauses are satisfied. The arena $A_\phi$ is presented in Figure \ref{fig:winp}.
Formally,
$A_\phi = \langle V, \Sigma_P, \Sigma_F, \Delta, \init \rangle$ is an arena of imperfect-information, where

\begin{itemize}
    \item 
    The set of states of the arena 
    $V = V^{public} \times V^{private}$, where
    
    $V^{public} = \{ \infty \} \cup \{\set{x}{1}{\bot}, \set{x}{1}{\top},\set{x}{2}{\bot},\set{x}{2}{\top},\dots,\set{x}{n}{\bot},\set{x}{n}{\top}\}$
    and
    $V^{private} = \{C_1,\ldots,C_m\} \times \{\top,\bot\}$.
    
    $V^{public}$ records the assignment to the propositional variables. The second component $V^{private}$ is used by player $F$ to check the satisfaction of any clause under a partial assignment revealed in the public state.
    
    \item $\Sigma_P = \{\top,\bot\}$, $\Sigma_F = \{d,n\}$. The actions of player $P$ correspond to assignments to the propositional variables. The action $n$ of player F corresponds to it proclaiming that a clause is not satisfied. Player $F$ loses the game if it incorrectly uses $n$. Otherwise, it has to use the action $d$.
    \item The initial state $\init = \left(\infty,(C_1,\bot) \right)$.
    \item The set of transistions $\Delta$ is as described below. In each of the transitions below, $a$ is either  $\top$ or $\bot$. \begin{enumerate}  
       
        \item $(\infty,(C_k,\bot)) \xrightarrow{(a,d)}  (\set{x}{1}{a},(C_k, \llbracket C_k \rrbracket_{x_1 = a} \footnote{$\llbracket C \rrbracket_{x = e}$ resolves to $\mathbf{true}$ if setting $x$ to $e$ in $C$ satisfies the clause, and resolves to $\mathbf{false}$ otherwise.}))$
        
        \item $(\set{x}{j}{*},(C_k,\alpha)) \xrightarrow{(e,d)} (\set{x}{j+1}{a},(C_k, \alpha \lor \llbracket C_k \rrbracket_{x_j = a}))$, where $(j < n-1)$
 
        \item $(\set{x}{n}{*},(C_k,\top)) \xrightarrow{(a,d)} (\infty, (C_{k+1},\bot))$, where $(k < m)$
        
        \item $(\set{x}{n}{*},(C_k,\bot)) \xrightarrow{(a,n)} (\set{x}{n}{a},(C_m,\bot))$, where $(k < m)$
        
        \item $(v,(C_m,\top)) \xrightarrow{(*,*)} (v, (C_{m},\top))$
        
        \item $(v,(C_m,\bot)) \xrightarrow{(*,*)} (v, (C_{m},\bot))$
    \end{enumerate}
\end{itemize}

\begin{figure*}[!htb]
    \centering
    \begin{tikzpicture}[->,>=stealth',shorten >=1pt,auto,node distance=2cm, semithick]
    \node[state, initial, fill = cyan, label = $\bot$] (C1S) {$\infty$};
    \node[below of = C1S] (C1S1) {};
    \node[below of = C1S1] (C1S2) {};
    \node[state, right of = C1S, label = $\llbracket C_1 \rrbracket_{x_1 = \top} $] (C1X1T) {$\set{x}{1}{\top}$};
    \node[state, below of = C1X1T, anchor = south ,label =below:$\llbracket C_1 \rrbracket_{x_1 = \bot} $] (C1X1F) {$\set{x}{1}{\bot}$};
    \node[state, right of = C1X1T, label = $\llbracket C_1 \rrbracket_{x_2 = \top} $] (C1X2T) {$\set{x}{2}{\top}$};
    \node[state, below of = C1X2T, anchor = south, label = $\llbracket C_1 \rrbracket_{x_2 = \bot} $] (C1X2F) {$\set{x}{2}{\bot}$};
    \node[right of  = C1X2T] (C1E1) {$\dots$};
    \node[right of  = C1X2F, anchor = south] (C1E2) {$\dots$};
    
    
    \node[state, right of = C1E1, fill = green]  (TC1XMT) {$\set{x}{n}{\top}$};
    \node[state, below of = TC1XMT, anchor = south, fill = green] (TC1XMF) {$\set{x}{n}{\bot}$};
    \node[inner sep = 5pt, fit = (TC1XMT)(TC1XMF),draw,loosely dashed,label =$\top$] (C1T) {};
    
    \node[state, right of = TC1XMT, fill = orange] (FC1XMT) {$\set{x}{n}{\top}$};
    \node[state, right of = TC1XMF, fill = orange] (FC1XMF) {$\set{x}{n}{\top}$};
    \node[inner sep = 5pt, fit = (FC1XMT)(FC1XMF),draw,loosely dashed,label =$\bot$] (C1F) {};
    
    \node[inner sep = 0.7cm,fit = (C1S)(FC1XMT)(FC1XMF),draw,loosely dashed] (C1) {};
    
    \path (C1S) edge node {$(\top,d)$} (C1X1T);
    \path [dotted] (C1S) edge node [anchor = east] {$(\bot,d)$} (C1X1F);
    \path (C1X1T) edge node {$(\top,d)$} (C1X2T); 
    \path [dotted] (C1X1F) edge node [anchor = north] {$(\bot,d)$} (C1X2F); 
    \path [dotted] (C1X1T) edge node [anchor = north] {$\bot$} (C1X2F);
    \path (C1X1F) edge node [anchor = south]{$\top$} (C1X2T);
    
    \node[below left] at (C1.north east) {$\mathbf{C_1}$};

    \node[state, below of = C1S1, fill = cyan, label = $\bot$] (C2S) {$\infty$};
    \node[below of = C2S] (C2S1) {};
    \node[below of = C2S1] (C2S2) {};
    \node[state, right of = C2S, label =below:$\llbracket C_2 \rrbracket_{x_1 = \top} $] (C2X1T) {$\set{x}{1}{\top}$};
    \node[state, below of = C2X1T, anchor = south ,label =below:$\llbracket C_2 \rrbracket_{x_1 = \bot} $] (C2X1F) {$\set{x}{1}{\bot}$};
    \node[state, right of = C2X1T, label = below:$\llbracket C_2 \rrbracket_{x_2 = \top} $] (C2X2T) {$\set{x}{2}{\top}$};
    \node[state, below of = C2X2T, anchor = south, label = below:$\llbracket C_2 \rrbracket_{x_2 = \bot} $] (C2X2F) {$\set{x}{2}{\bot}$};
    \node[right of  = C2X2T] (C2E1) {$\dots$};
    \node[right of  = C2X2F, anchor = south] (C2E2) {$\dots$};
    \node[below of = C2E2] (C2E3) {$\vdots$};

    \node[state, right of = C2E1, fill = green]  (TC2XMT) {$\set{x}{n}{\top}$};
    \node[state, below of = TC2XMT, anchor = south, fill = green] (TC2XMF) {$\set{x}{n}{\bot}$};
    \node[inner sep = 5pt, fit = (TC2XMT)(TC2XMF),draw,loosely dashed,label =$\top$] (C2T) {};
    
    \node[state, right of = TC2XMT, fill = orange] (FC2XMT) {$\set{x}{n}{\top}$};
    \node[state, right of = TC2XMF, fill = orange] (FC2XMF) {$\set{x}{n}{\top}$};
    \node[inner sep = 5pt, fit = (FC2XMT)(FC2XMF),draw,loosely dashed,label =$\bot$] (C2F) {};
    
    \node[inner sep = 0.7cm,fit = (C2S)(FC2XMT)(FC2XMF),draw,loosely dashed] (C2) {};
    
    \path (C2S) edge node [anchor = north] {$(\top,d)$} (C2X1T);
    \path [dotted] (C2S) edge node [anchor = east] {$(\bot,d)$} (C2X1F);
    \path (C2X1T) edge node {$(\top,d)$} (C2X2T); 
    \path [dotted] (C2X1F) edge node [anchor = north] {$(\bot,d)$} (C2X2F); 
    \path [dotted] (C2X1T) edge node [anchor = north] {$\bot$} (C2X2F);
    \path (C2X1F) edge node [anchor = south]{$\top$} (C2X2T);
    
    \path (C1T) edge node [anchor = west] {~~~~~$(*,*)$} (C2S);
    
    \node[below left] at (C2.north east) {$\mathbf{C_2}$};
    
    
    \node[state, below of = C2S2, fill = cyan, label = $\bot$] (CMS) {$\infty$};

    \node[state, right of = CMS, label =below:$\llbracket C_m \rrbracket_{x_1 = \top} $] (CMX1T) {$\set{x}{1}{\top}$};
    \node[state, below of = CMX1T, anchor = south ,label =below:$\llbracket C_m \rrbracket_{x_1 = \bot} $] (CMX1F) {$\set{x}{1}{\bot}$};
    \node[state, right of = CMX1T, label = below:$\llbracket C_m \rrbracket_{x_2 = \top} $] (CMX2T) {$\set{x}{2}{\top}$};
    \node[state, below of = CMX2T, anchor = south, label = below:$\llbracket C_m \rrbracket_{x_2 = \bot} $] (CMX2F) {$\set{x}{2}{\bot}$};
    \node[right of  = CMX2T] (CME1) {$\dots$};
    \node[right of  = CMX2F, anchor = south] (CME2) {$\dots$};
   
    \node[state, right of = CME1, fill = green]  (TCMXMT) {$\set{x}{n}{\top}$};
    \node[state, below of = TCMXMT, anchor = south, fill = green] (TCMXMF) {$\set{x}{n}{\bot}$};
    \node[inner sep = 5pt, fit = (TCMXMT)(TCMXMF),draw,loosely dashed,label =$\top$] (CMT) {};
    
    \node[state, right of = TCMXMT, fill = orange] (FCMXMT) {$\set{x}{n}{\top}$};
    \node[state, right of = TCMXMF, fill = orange] (FCMXMF) {$\set{x}{n}{\top}$};
    \node[inner sep = 5pt, fit = (FCMXMT)(FCMXMF),draw,loosely dashed,label =$\bot$] (CMF) {};
    
    \node[inner sep = 0.7cm,fit = (CMS)(FCMXMT)(FCMXMF),draw,loosely dashed] (CM) {};
    
    \path (CMS) edge node [anchor = north] {$(\top,d)$} (CMX1T);
    \path [dotted] (CMS) edge node [anchor = east] {$(\bot,d)$} (CMX1F);
    \path (CMX1T) edge node {$(\top,d)$} (CMX2T); 
    \path [dotted] (CMX1F) edge node [anchor = north] {$(\bot,d)$} (CMX2F); 
    \path [dotted] (CMX1T) edge node [anchor = north] {$\bot$} (CMX2F);
    \path (CMX1F) edge node [anchor = south]{$\top$} (CMX2T);
    
    \path (C2T) edge node [anchor = east] {$(*,*)$~~~} (CMS);
    
    \node[above left] at (CM.south east) {$\mathbf{C_m}$};

    \path [bend left] (C1F) edge node {$(*,n)$} (CMF.east) ;
    \path [bend left] (C2F) edge node [anchor = east]{$(*,n)$} (CMF.east) ;
    
    \path (CMF) edge [loop below] node {$(*,*)$} (CMF);
    \path (CMT) edge [loop below] node {$(*,*)$} (CMT);
    
    \end{tikzpicture}
    \caption{Games $G_\phi^{R}$ and $G_\phi^{S}$ are played on the arena $A_\phi$. $\phi$ is satisfiable if and only if a memoryless player $P$ has a winning strategy.}
    \label{fig:winp}
\end{figure*}
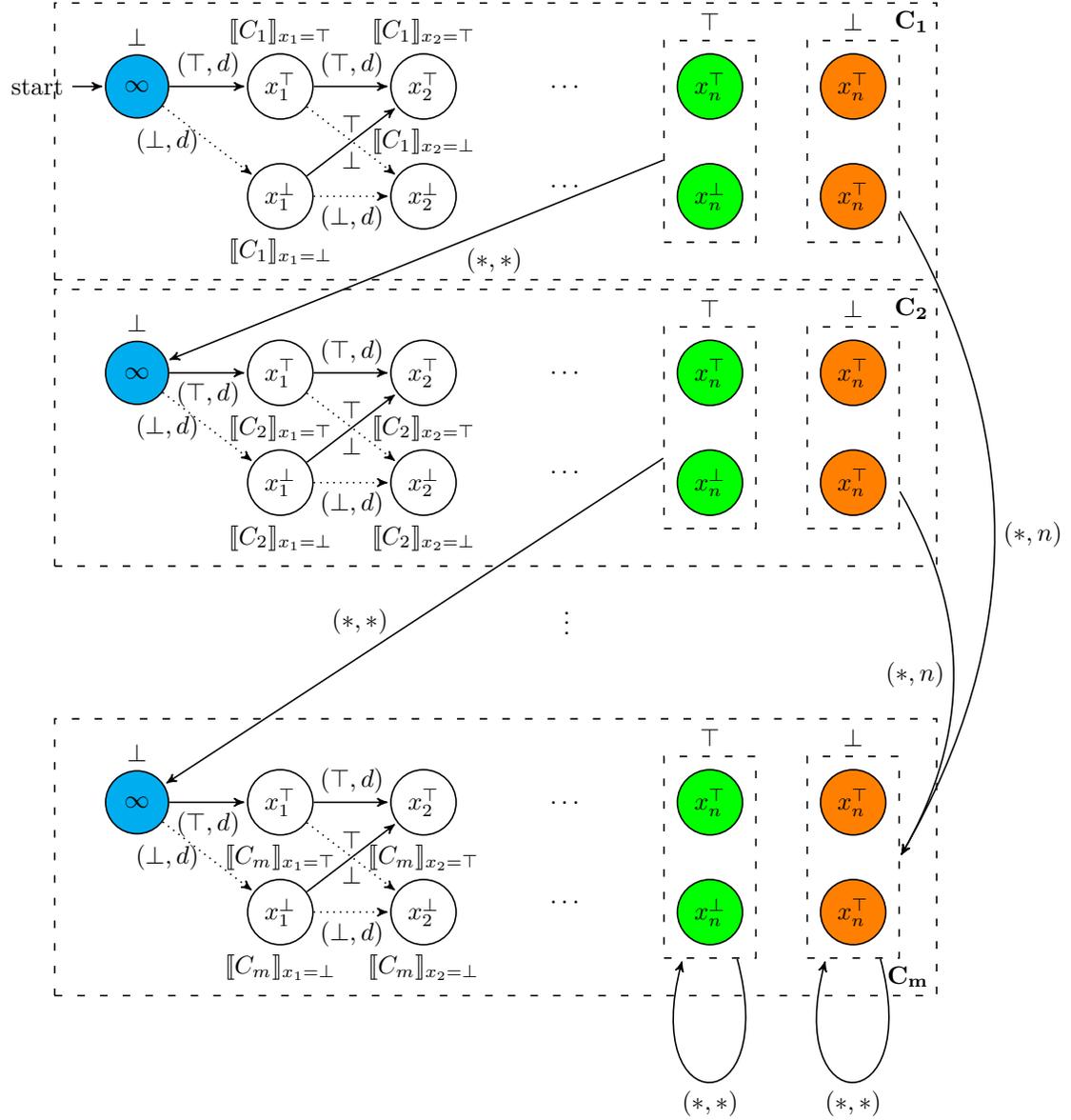
For proving the hardness of the reachability objective, we consider the reachability game $G_\phi^{R} = (A_\phi, Reach(R))$ ,where $R = \{ ( v,(C_m,\top))~|~v \in V^{public} \}$.  The following lemma is a direct consequence of the construction of the arena.
\begin{lemma}
$\phi$ is satisfiable if and only if memoryless player $P$ has a winning strategy for winning $G_\phi^{R} = (A_\phi,Reach(R))$.
\label{winp:reach}
\end{lemma}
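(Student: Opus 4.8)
The plan is to set up an explicit correspondence between the reachable behaviour of a memoryless strategy $S_P$ and a single truth assignment $a=(a_1,\dots,a_n)$ to $x_1,\dots,x_n$, and then show that $S_P$ wins $G_\phi^{R}$ exactly when $a$ satisfies every clause. The key observation is that $P$ sees only the public component, and within any clause block the public trajectory is forced to be $\infty,\set{x}{1}{a_1},\set{x}{2}{a_2},\dots,\set{x}{n}{a_n}$, where $a_1=S_P(\infty)$ and $a_{j+1}=S_P(\set{x}{j}{a_j})$. Because the public states carry \emph{no} clause index and $S_P$ is memoryless, this same assignment $a$ is replayed in every clause block the play enters. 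This is precisely the point at which the memoryless restriction is used: it forces $P$ to commit to one global assignment rather than tailoring its choices clause by clause, which is exactly what makes the problem encode \textsc{Sat} rather than something easier.

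For the ($\Leftarrow$) direction, suppose $a\models\phi$ and let $S_P$ induce $a$. I would argue that against any $S_F$ the play advances through all blocks: at the end of block $C_k$ the private flag is $\top$ (since $a$ satisfies $C_k$), so rule~3 (action $d$) carries the play to $C_{k+1}$, while the proclaim-move $n$ of rule~4 is disabled because its guard requires flag $\bot$, and playing $n$ anyway is by convention an incorrect proclamation that loses for $F$. Iterating over $k$, the play reaches a state $(\set{x}{n}{\cdot},(C_m,\top))\in R$, so $S_P$ is winning. For ($\Rightarrow$) I would prove the contrapositive: if the induced $a$ falsifies some clause, let $C_k$ be the first such clause. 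Then $F$ plays $d$ through the satisfied blocks $C_1,\dots,C_{k-1}$ (whose flags are $\top$, and which lie outside $R$ since $R$ requires clause index $C_m$) and, upon reaching $(\set{x}{n}{\cdot},(C_k,\bot))$, plays $n$ via rule~4 when $k<m$, or simply remains in the $(C_m,\bot)$ sink via rule~6 when $k=m$. Either way the play is trapped forever in the absorbing non-target state $(C_m,\bot)$, so $S_P$ is not winning; by contraposition a winning $S_P$ forces $a\models\phi$, i.e.\ $\phi$ is satisfiable.

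The step I would spell out most carefully — and the main obstacle — is the semantics of $F$'s $n$-action and the claim that $F$ is effectively forced to be "honest." Concretely, I must verify that (i) whenever a clause flag is $\bot$, $F$ genuinely has a move into the $(C_m,\bot)$ sink, and that this sink is disjoint from $R$ and absorbing, so $P$ truly loses; and (ii) whenever the flag is $\top$, no $F$-move can block progress to the next clause, since the $n$-guard is $\bot$ and an incorrect $n$ is defined to lose. A secondary check is that the private flag is updated correctly, i.e.\ after assigning $x_1,\dots,x_n$ it equals $\top$ iff some literal of $C_k$ is satisfied by $a$; this follows by a short induction over rules~1--2 on the disjunctive accumulation $\alpha\lor\llbracket C_k\rrbracket_{x_j=a}$ (modulo the minor index slips in the transition rules, which I would reconcile to cover all $n$ variables). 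I would also note that it is exactly $F$'s full information about the private component that lets it detect an unsatisfied clause while $P$, seeing only $V^{public}$, cannot, which is the structural reason the reduction is sound.
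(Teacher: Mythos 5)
Your proposal is correct and follows essentially the same route as the paper's own proof: a memoryless strategy for $P$ corresponds to a single global assignment, a satisfying assignment lets the play advance through every clause block into $R$ because $F$'s $n$-move is only available (honestly) when a clause flag is $\bot$, and conversely $F$ wins by playing $d$ until the first falsified clause and then invoking the $n$-move (or the absorbing $(C_m,\bot)$ sink). The paper compresses this into four sentences, whereas you additionally verify the guard/absorption mechanics and the flag-update induction, which the paper leaves implicit.
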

\begin{proof}
See Appendix.
\end{proof}

\begin{corollary}
The WIN$_P$ problem with parity objective for player $P$ is NP-hard.
\label{corollary:winp}
\end{corollary}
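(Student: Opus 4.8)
The plan is to reduce the reachability instance from Lemma~\ref{winp:reach} to a parity instance on the \emph{same} arena $A_\phi$, exploiting the fact that a reachability condition is a special case of a parity condition with only two priorities. Since Lemma~\ref{winp:reach} already provides a polynomial-time reduction from SAT to WIN$_P$ under a reachability objective, it suffices to re-express $Reach(R)$ as a parity objective $Parity(p)$ on $A_\phi$ so that both objectives define exactly the same set of winning plays. The whole construction then stays polynomial in the size of $\phi$, and NP-hardness transfers from SAT.

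Concretely, I would define the priority function $p : V \to \{0,1\}$ by setting $p(v) = 0$ for every target state $v \in R = \{(v,(C_m,\top)) \mid v \in V^{public}\}$ and $p(v) = 1$ for every other state. The key structural fact I would invoke is that each target state is \emph{absorbing}: by transition rule~(5), every state of the form $(v,(C_m,\top))$ carries only a self-loop. I would then argue that $Parity(p) = Reach(R)$ as follows. If a play $\pi$ reaches $R$, then since $R$ is absorbing it remains in $R$ forever, so $Inf(\pi) \subseteq R$ and $\min\{p(v) \mid v \in Inf(\pi)\} = 0$, which is even; hence $\pi \in Parity(p)$. Conversely, if $\pi$ never visits $R$, then $Inf(\pi)$ contains only states of priority $1$, so the minimum priority occurring infinitely often is $1$, which is odd, and $\pi \notin Parity(p)$. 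Thus the two objectives coincide exactly, and $P$ has a winning strategy in $(A_\phi, Parity(p))$ if and only if it has one in $(A_\phi, Reach(R))$, which by Lemma~\ref{winp:reach} holds if and only if $\phi$ is satisfiable.

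Since producing the priority labelling is a trivial linear-time pass over the $\mathcal{O}(mn)$ states of $A_\phi$, the composition $\phi \mapsto (A_\phi, Parity(p))$ is a polynomial-time many-one reduction from SAT, establishing that WIN$_P$ with a parity objective is NP-hard.

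I do not anticipate a genuine obstacle here; the only point requiring care is the verification that the two-priority parity condition captures reachability exactly, and this rests entirely on the absorbing self-loops at the target states. I would double-check that no play can avoid $R$ forever while still visiting priority-$0$ states infinitely often --- which cannot happen precisely because priority $0$ is assigned \emph{only} to the absorbing set $R$, so any play that ever touches priority $0$ is trapped in $R$ from that point on. This makes the equivalence $Parity(p) = Reach(R)$ airtight and completes the reduction.
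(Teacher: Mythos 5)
Your proposal is correct and matches the paper's (implicit) argument: the corollary is stated as a direct consequence of Lemma~\ref{winp:reach}, relying precisely on the standard fact that reachability to an absorbing target set is expressible as a two-priority parity condition on the same arena. Your write-up simply makes explicit the verification that $Parity(p) = Reach(R)$ via the self-loops of rule~(5), which is the intended justification.
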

To show that WIN$_P$ with safety objective for player $P$ is NP-hard, we consider the game $G_\phi^S = (A_\phi,Safe(S))$, where $S = V \setminus \{(x_n^\top,(C_m,\bot)),(x_n^\bot,(C_m,\bot))\}$. 

\begin{lemma}
$\phi$ is satisfiable if and only if memoryless player $P$ has a winning strategy for winning $G_\phi^{S}=(A_\phi,Safe(S))$.
\label{winp:safety}
\end{lemma}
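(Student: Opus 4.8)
The plan is to prove both implications by exploiting the single structural fact that drives the whole construction: because $P$ is memoryless and observes only $V^{public}$, its strategy $S_P$ induces \emph{one} global assignment. Indeed, the public component evolves deterministically as $\infty, \set{x}{1}{a_1}, \set{x}{2}{a_2}, \dots, \set{x}{n}{a_n}$, where $a_1 = S_P(\init)$ and $a_{j+1} = S_P(\set{x}{j}{a_j})$ (reading off the $\Sigma_P$-component), and transition rule 3 resets the public state to $\infty$ whenever a clause is cleared. Hence $P$ replays the same assignment $a = (a_1,\dots,a_n)$ in every clause block, and after the last variable of block $C_k$ the private bit equals $\bigvee_{j} \llbracket C_k \rrbracket_{x_j = a_j}$, which is $\top$ exactly when $a$ satisfies $C_k$. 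I would record this as a preliminary observation before splitting into the two directions; it is the same observation underlying \autoref{winp:reach}, so the two proofs run in parallel (here avoiding the bad sink plays the role that reaching the good sink played there).

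For the forward direction, assume $\phi$ is satisfiable and let $P$ play a satisfying assignment $a$ via the memoryless strategy above (arbitrary on off-path states, which are never visited). Then the accumulated private bit is $\top$ at the end of every clause block, so at each $(\set{x}{n}{*},(C_k,\top))$ with $k<m$ only rule 3 applies: the play is driven through $C_1, \dots, C_{m}$ and settles into the self-loop at $(\set{x}{n}{*},(C_m,\top))$ via rule 5. The move that rule 4 depends on, namely $F$ choosing $n$ at a block whose bit is $\bot$, never becomes available, so neither $(\set{x}{n}{\top},(C_m,\bot))$ nor $(\set{x}{n}{\bot},(C_m,\bot))$ is ever reached; should $F$ instead play $n$ at a satisfied block it uses $n$ incorrectly and loses outright. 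In every outcome the play stays inside $S$, so $S_P$ is winning for $G_\phi^S$.

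For the converse, assume $\phi$ is unsatisfiable and fix an arbitrary memoryless $S_P$; its induced assignment $a$ fails some clause, and I take $C_k$ to be the \emph{first} such clause. Since $C_1,\dots,C_{k-1}$ are all satisfied by $a$, the play advances (via rule 3) all the way to the block of $C_k$ and finishes that block in a state $(\set{x}{n}{*},(C_k,\bot))$ whose private bit is $\bot$. If $k<m$, player $F$ applies rule 4 with action $n$ and moves to $(\set{x}{n}{*},(C_m,\bot)) \in V \setminus S$; if $k=m$, the play is already in that bad state. Either way $F$ forces a visit to $V \setminus S$, so no memoryless $S_P$ is winning, which completes the equivalence.

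The delicate points I expect to spend the most care on are the two places where the clause-by-clause gadget interacts with memorylessness. First, I must justify that $P$ cannot escape replaying a single consistent assignment; this is immediate from the determinism of the public transitions, but it is the crux that reduces SAT rather than a per-clause satisfiability check. Second, in the converse I must argue that the first failing clause $C_k$ is genuinely reachable: this relies on advancement to $C_k$ requiring only that $C_1,\dots,C_{k-1}$ be satisfied (guaranteed by minimality of $k$), together with the boundary bookkeeping at $k=m$, where the failing block \emph{is} the bad sink rather than a state from which $F$ must still fire rule 4.
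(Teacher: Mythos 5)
Your proof is correct and takes essentially the same approach as the paper: a memoryless $P$ is forced to replay a single assignment, so a satisfying assignment keeps every play inside $S$, while for unsatisfiable $\phi$ player $F$ plays the $n$-move at the first failed clause (or the play already sits in the bad sink when that clause is $C_m$) to force a visit to $V \setminus S$. The paper's own proof is only a two-sentence sketch of the unsatisfiable direction; your write-up fills in both directions and the $k=m$ boundary case, but the underlying argument is identical.
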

\begin{proof}
See Appendix.
\end{proof}

\paragraph*{WIN$_F$ is PSPACE-complete.}

Player $F$ has more ways to defeat player $P$ when player $P$'s strategies are restricted. In this case we prove that determining if $F$ has a  winning strategy is harder. First we show that WIN$_F \in$ PSPACE. 
Consider a partial-information game $G$ played on arena $A$, with a parity objective for player $P$. We bound the maximum number of rounds needed by the full-information player $F$ to win this game. 

\begin{lemma}
If player $F$ can win this game, then he can win this game in $\mathcal{O}(n^2)$ rounds, where $n = |V|$.
\end{lemma}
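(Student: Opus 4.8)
The plan is to recast the game from $F$'s viewpoint as a perfect-information game on a \emph{knowledge space} and then exploit the positional determinacy of parity games to count rounds. Since $P$ is memoryless, its strategy is essentially a function $V^{public}\to\Sigma_P$, so the only thing $F$ ever learns about $P$ is, for each public state it has already seen, which action $P$ plays there. Accordingly, I would track the pair $(v,f)$, where $v\in V$ is the current position and $f\colon V^{public}\rightharpoonup\Sigma_P$ is the partial function recording the responses of $P$ that $F$ has observed so far. Because $P$ is memoryless, $f$ is always consistent (a public state always triggers the same response), so $f$ only ever grows. On this knowledge space $F$ plays a genuine two-player parity game: at a state $(v,f)$ with public part $q$, if $q\in\mathrm{dom}(f)$ then $P$'s move is forced to be $f(q)$; otherwise $P$ adversarially reveals some $a\in\Sigma_P$, the knowledge moves to $f\cup\{q\mapsto a\}$, and $F$ responds based on this updated knowledge. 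The priorities are inherited from $p$, with $F$ pursuing the complementary (odd-min) parity condition.

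The first step is to argue that $F$ wins the original game if and only if $F$ wins this knowledge game, and that a knowledge-game win can be achieved by a strategy depending only on the current knowledge state $(v,f)$. One direction is immediate; the content is that a history-dependent winning strategy of $F$ can be replayed on the knowledge game, whose state $(v,f)$ summarizes everything about $P$'s future behavior that is not yet adversarially free. Since the knowledge game is a parity game, positional determinacy gives $F$ a winning strategy $S_F$ that is a function of $(v,f)$ alone; note this holds regardless of the fact that the knowledge space is exponentially large, since positional determinacy is insensitive to state-space size.

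With such an $S_F$ fixed, I would bound the number of rounds by a phase decomposition. Call a round a \emph{discovery} round if its public state $q$ is not yet in $\mathrm{dom}(f)$; each discovery round enlarges $\mathrm{dom}(f)$ by one, and since $|V^{public}|\le|V|=n$ there are at most $n$ discovery rounds. Between two consecutive discovery rounds the knowledge $f$ is constant, $P$'s moves are all forced (every visited public state is known), and $F$'s move is the deterministic value $S_F(v,f)$; hence the position $v$ evolves deterministically, and among the at most $n=|V|$ positions it must repeat within $n$ steps. When it repeats with $f$ unchanged, the play is locked into a fixed cycle whose infinitely-often set, and therefore whose parity value, is determined. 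Consequently, within at most $(n+1)\cdot n=\mathcal{O}(n^2)$ rounds the play enters a determined cycle; because $S_F$ is winning, that cycle satisfies $F$'s objective, so $F$'s victory is sealed within $\mathcal{O}(n^2)$ rounds. The main obstacle I anticipate is the equivalence between the original game and the knowledge game together with the reduction to knowledge-positional strategies: one must verify that collapsing $F$'s memory to the pair $(v,f)$ loses no power, i.e.\ that no relevant information about $P$'s memoryless strategy is forgotten, after which the round count is a routine pigeonhole argument.
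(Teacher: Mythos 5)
Your argument is correct, and its combinatorial core is the same as the paper's: because $P$ is memoryless, its response at a public state is pinned down the first time that state is visited, so there are at most $|V^{public}|\le n$ discovery events, and between discoveries the play unfolds in a ``known'' arena, which yields the quadratic bound. What differs is the formal route. The paper does not build a knowledge arena; it proves the more general Lemma~\ref{lemma:general_bound} by induction on the number of states whose $P$-moves are still unknown, maintaining a set $V_3$ of states on which $P$'s moves are fixed: the first visit to a public state $v_1$ transfers the whole slice $\{v_1\}\times V^{private}$ into $V_3$, and between such transfers $F$ either wins inside the known part or escapes it within $|V_3|$ rounds, giving the explicit bound $(|V_3|+|V^{public}||V^{private}|)(|V^{public}|+1)$, of which the $\mathcal{O}(n^2)$ statement is the instance $V_3=\emptyset$. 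You instead reduce to a perfect-information parity game on the exponentially large space of pairs $(v,f)$, argue the two games equivalent for $F$, and invoke memoryless determinacy of parity games to make $F$'s play a function of $(v,f)$ alone, after which the phase/pigeonhole count is routine. Your approach buys rigor at exactly the point the paper's sketch leaves implicit---why $F$ may be assumed to act deterministically between discoveries, so that the trajectory must close into a cycle within $n$ steps---at the price of invoking positional determinacy on a large auxiliary arena. The paper's induction is more elementary, needs no determinacy theorem, and directly delivers the generalized statement with a pre-fixed part $V_3$ (your construction would recover that generality by seeding $f$ with the fixed moves); conversely, your knowledge-game formulation is a cleaner and more easily checkable way to make the same counting argument precise.
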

\begin{proof}
We prove a more general lemma, and this lemma is a direct consequence.
\end{proof}
\begin{lemma}
Suppose $G = V_3 \coprod (V^{public} \times V^{private})$ 
such that player $P$'s moves on $V_3$ is fixed. Then, if player $F$ can win, it can win in $(|V_3| + |V^{public}||V^{private}|) (|V^{public}|+1)$ rounds.
\label{lemma:general_bound}
\end{lemma}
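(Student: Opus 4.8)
The plan is to exploit two facts: that a memoryless $P$ is completely described by a single function on the public states, and that player $F$ uncovers this function one value at a time. Since $S_P$ ignores both the private component and the memory ($k=1$), on the part $V^{public}\times V^{private}$ it is equivalent to a map $g : V^{public} \to \Sigma_P$, while on $V_3$ the move of $P$ is fixed and hence known to $F$ from the outset. Because $P$ moves first and $F$ sees $\sigma_i^P$ before responding (condition (b) in the definition of $outcome$), $F$ observes the value $g(u)$ exactly when the play first sits at a state whose public component is $u$. I would record $F$'s accumulated information as a partial map $\hat g$ whose domain is the set of public states queried so far; $\hat g$ grows monotonically, its domain has size at most $|V^{public}|$, and so along any play there are at most $|V^{public}|$ \emph{learning events} -- rounds in which $\hat g$ strictly grows. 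These cut the play into at most $|V^{public}|+1$ maximal \emph{phases} on which $\hat g$ is constant.

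The core of the argument is a bound of $N := |V_3| + |V^{public}|\,|V^{private}|$ rounds per phase. On a phase $\hat g$ is constant, which means every public state encountered has already been queried, so $P$'s reply at each step is forced and predictable by $F$: within the phase the arena behaves as a one-player graph controlled by $F$ alone (on $V_3$ the reply is fixed, on $V^{public}\times V^{private}$ it is dictated by $\hat g$). I would then argue that a winning $F$ may be assumed never to revisit a state $v \in V$ while $\hat g$ stays fixed: if the play returned to the same configuration $(v,\hat g)$, the memorylessness of $P$ guarantees that the continuation taken after the first visit is again available, so I can splice it in and delete the intervening loop without changing $F$'s knowledge or destroying the win. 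Since there are only $N$ states, a loop-free phase lasts at most $N$ rounds, and multiplying by the $|V^{public}|+1$ phases yields the claimed $N(|V^{public}|+1)$ bound.

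It remains to interpret winning ``within $k$ rounds'' for the parity objective. After the last learning event $\hat g$ is final on the recurrent part of the play, so the residual game is a one-player parity game for $F$ on at most $N$ vertices; by positional determinacy $F$ can force the play into a cycle whose minimum priority is odd, and such a cycle is reached and closed within one loop-free phase. Thus once the final phase begins the winner is decided, and the number of rounds up to this decisive configuration is at most $N(|V^{public}|+1)$. The outer $\mathcal{O}(n^2)$ bound then follows by taking $V_3=\emptyset$ and noting $|V^{public}|\le |V|$.

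The step I expect to be the main obstacle is the loop-removal/splicing argument inside a phase: I must verify that deleting a loop returning to $(v,\hat g)$ yields a play that is (i) consistent with some genuine memoryless $P$-strategy -- this is where $P$ ignoring the private component and the history is essential, since the spliced suffix must use $P$-replies agreeing with every earlier reply at the same public state -- and (ii) still winning for $F$, which for parity requires that the discarded loop did not carry the infinitely recurring minimal priority. Handling (ii) cleanly is precisely the reason I confine the genuinely infinite, parity-deciding behaviour to the final phase, where $\hat g$ is frozen and ordinary positional determinacy of one-player parity games applies.
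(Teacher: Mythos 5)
Your proof is correct and follows essentially the same route as the paper's: the paper's induction on $|V^{public} \times V^{private}|$ is your phase decomposition in recursive form, with each inductive step corresponding to one learning event (the first visit to a public state $v_1$ moves $\{v_1\}\times V^{private}$ into the fixed set $V_3$), and the paper's ``$|V_3|$ moves within known territory'' bound corresponding to your loop-free phase bound of $|V_3|+|V^{public}||V^{private}|$. Your explicit handling of the parity semantics of ``winning in $k$ rounds'' and of the validity of the loop-splicing step is more careful than the paper's proof, which leaves both points implicit.
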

\begin{proof}
See Appendix.
\end{proof}

The next result is now direct. 

\begin{lemma}
In a partial-information game with a memoryless partial-information player with parity objective, WIN$_F \in$ PSPACE. 
\end{lemma}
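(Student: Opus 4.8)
The plan is to combine the polynomial round bound just established with the observation that a bounded-horizon game in which one player plays memorylessly can be encoded as a true quantified Boolean formula of polynomial size, and then to invoke the PSPACE-decidability of QBF truth. Since $\phi$ is a parity objective for $P$, the winning condition of $F$ is $V^\omega\setminus\phi$, which is again a parity objective (shift every priority by one). By the preceding lemma, if $F$ wins at all then $F$ wins within $T=\mathcal{O}(n^2)$ rounds, so it suffices to decide the finite-horizon game of length $T$. The first task is therefore to replace the infinitary parity condition by a finite certificate of an $F$-win that is observable within $T$ rounds: because the relevant (belief-augmented) configuration space is finite, forcing a win amounts to forcing the play into a lasso whose minimum priority is odd, and the round bound guarantees that such a decisive configuration is reached within $T$ steps; conversely, once it is reached $F$ can loop there forever, so the finite certificate is equivalent to an $F$-win of the infinite game.

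First I would take as variables, for each round $i\le T$, the move $p_i\in\Sigma_P$ of player $P$ and the move $f_i\in\Sigma_F$ of player $F$, written in binary. Because $P$ moves first in every round and $F$ responds, and because we ask whether $F$ (existential moves) can beat every $P$ (universal moves), the quantifier prefix is
\[
\forall p_1\,\exists f_1\,\forall p_2\,\exists f_2\,\cdots\,\forall p_T\,\exists f_T ,
\]
which has $\mathcal{O}(n^2)$ alternations. The matrix uses innermost existential auxiliary variables encoding the full states $v_1,\dots,v_T$, constrained by $v_i=\Delta(v_{i-1},(p_i,f_i))$ from the fixed $\init$; this is a deterministic, polynomially sized circuit over the move variables. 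On top of the state variables I would build two polynomially sized subformulas: $\mathrm{cons}(\vec p)$, asserting that $P$ is memoryless, i.e.\ $v_{i-1}{\downarrow}1 = v_{j-1}{\downarrow}1 \Rightarrow p_i=p_j$ for all $i,j\le T$; and $\mathrm{win}_F$, asserting the finite lasso certificate above (some configuration repeats within the $T$ steps and the induced cycle has odd minimum priority). The matrix is then $\mathrm{cons}(\vec p)\Rightarrow\mathrm{win}_F$.

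The implication form is what makes the universal quantification over $P$ faithful. An inconsistent move sequence corresponds to no memoryless strategy, and making $\mathrm{cons}(\vec p)$ false satisfies the matrix trivially, so the adversary controlling the $\forall$-variables gains nothing by violating memorylessness and is effectively confined to consistent plays; every consistent $\vec p$ in turn realizes a genuine memoryless $S_P$, and vice versa. Conversely, this encoding lets $F$ exploit $P$'s memorylessness: once $P$'s move at a public state is observed, any later revisit is forced to repeat it (else $\mathrm{cons}$ fails and $F$ wins for free), so the Skolem functions witnessing the $\exists f_i$ may depend on $P$'s committed choices exactly as an unlimited-memory $F$ would. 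Hence the QBF is true iff $F$ has a winning strategy in the bounded game, iff (by the round bound) $F$ wins the original game, and since the formula has polynomial size, membership in PSPACE follows. I expect the main obstacle to be $\mathrm{win}_F$: formulating a correct, polynomially checkable finite parity certificate and verifying that it is met within $T$ rounds exactly when $F$ wins the infinite game — this is precisely where the round-bound lemma must be invoked carefully, and where the interaction between lasso detection and the memoryless-consistency constraint has to be checked.
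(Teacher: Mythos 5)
Your proposal is correct and follows essentially the same route as the paper: invoke the preceding $\mathcal{O}(n^2)$-round bound, encode the resulting bounded-horizon game as a polynomial-size QBF with alternating quantifiers (universal for $P$'s moves, existential for $F$'s responses), and conclude via PSPACE-decidability of QBF. The paper's own proof is just a one-line appeal to the standard ``formula game'' encoding of bounded-round games (citing Sipser), so your explicit construction --- the memorylessness-consistency constraint in implication form and the lasso-based winning certificate --- is a faithful, more detailed elaboration of exactly what that citation leaves implicit.
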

\begin{proof}
Any two-player game which ends in a bounded number of rounds in the size of the game can be encoded as a QBF formula (formula game) \cite{Sipser:1996:ITC:524279}.
\end{proof}


Next, for any QBF formula $\psi$ of the form $ \forall x_1 \exists y_1 \forall x_2 \exists y_2 \dots \forall x_n \exists y_n. ~(C_1 \land C_2 \land \dots \land C_m),$ we construct an arena $A_\psi$ of size $\bigo(m\cdot n)$. The arena $A_\psi$ is presented in Figure \ref{fig:winf}.
Player $P$ has partial information and can use only positional strategies. The idea is that player $P$ has to assign the $x$ variables and player $F$ has to assign the $y$ variables such that the clauses $c_1,\dots,c_m$ are all satisfied. In each round the players have to make fresh assignments to all the variables. The rounds correspond to a fresh assignment for every clause. Both players have to ensure that its opponent makes the same assignment to its corresponding variables across different rounds (clauses). 

Player $P$ cannot assign different values as he can only play positional strategies. He has a exit move which he can use to push the game to a losing state for player $F$. By construction, he cannot use this move in the 1st round (1st clause) or else he will lose. 
As long as player $F$ makes the play come to the same public component when it comes to player $P$'s turn, player $P$ cannot use a exit move as he is forced to play positional strategies. But if player $F$ changes his assignment, then the visible part of the state will be something new (and player $F$ doesn't know player $P$'s move for this visible component), where player $P$ can play the exit move which will cause player $F$ to lose. This ensures that player $F$ cannot change the assignments as well. Now if there is a strategy for player $F$ to win this game, then the formula is satisfiable. If there is no strategy for player $F$ to win, then the formula is unsatisfiable.
\noindent Formally, $A_\psi =  \langle V, \Sigma_P, 
\Sigma_F, \Delta, \init \rangle$, where \begin{itemize}
    \item  The set of states of the arena $V = V^{public} \times V^{private}$, where \\
   $V^{public} = \{\infty\} 
                    ~\cup~\{\set{x}{1}{\bot},   \set{x}{1}{\top},\set{x}{2}{\bot},\set{x}{2}{\top},\dots,\set{x}{n}{\bot},\set{x}{n}{\top}\} 
                    ~\cup~\{\set{y}{1}{\bot}, \set{y}{1}{\top}, \set{y}{2}{\bot}, \set{y}{2}{\top}, \dots, \set{y}{n}{\bot}, \set{y}{n}{\top}\} \text{ and }$
    $V^{private} = \{C_1,\ldots,C_m\} \times \{\top,\bot\}$. $V^{public}$ records the assignment to the propositional variables that the 2 players jointly decide. The second component $V^{private}$ is used by player $F$(the full information player) to check the satisfaction of any clause under a partial assignment revealed in the visible state.
    \item $\Sigma_1 = \{\top,\bot,e\}$ and $\Sigma_2 = \{\top,\bot\}$. $e$ is a special move used by player $P$ (partial-information player) to proclaim that player $F$ (full-information player) has changed his assignments of variables after using a different assignment to satisfy some clause before. 
    \item The initial state $v_{init} = \left(\infty,(C_1,\bot) \right)$.
    \item The set of transitions $\Delta$ is as described below. In each of the transitions below, $a$ is either  $\top$ or $\bot$.
    \begin{enumerate}
       \item $(*,(C_1,*)) \xrightarrow{(e,*)} (\set{y}{n}{\bot},(C_m,\top))$
        
        \item $(\set{y}{j}{\top / \bot},(C_k,*)) \xrightarrow{(e,*)} (\set{y}{n}{\bot},(C_m,\bot))$ if $j < n$ and $m \geq k >1$

        \item $(\infty,(C_k,\alpha)) \xrightarrow{(a,*)}  (\set{x}{1}{a},(C_k, \alpha \lor \llbracket C_k \rrbracket_{x_1 = a}))$
        
         \item $(\set{x}{j}{*},(C_k,\bot)) \xrightarrow{(*,a)} (\set{y}{j}{a},(C_k),\llbracket C_k \rrbracket_{y_j = a})$
        
        \item $(\set{x}{j}{*},(C_k,\top)) \xrightarrow{(*,a)} (\set{y}{j}{a},(C_k, \top))$

        \item $(\set{y}{n}{*},(C_k,\top)) \xrightarrow{(*,*)} (\infty, (C_{k+1},\bot))$ where $(k < m)$
        
        \item $(\set{y}{n}{*},(C_k,\bot)) \xrightarrow{(*,*)} (\set{y}{n}{\bot},(C_m,\bot))$ where $(k < m)$
        
        \item $(\set{y}{j}{*},(C_k,\bot)) \xrightarrow{(a,*)} (\set{x}{j}{a},(C_k,\llbracket C_k \rrbracket_{y_j = a})) $ where $j < n$
        
        \item $(\set{y}{j}{*},(C_k,\top)) \xrightarrow{(a,*)} (\set{x}{j}{a},(C_k,\top))$ where $j < n$.
    \end{enumerate}
\end{itemize}

\begin{figure*}[!htb]
    \centering
    \begin{tikzpicture}[->,>=stealth',shorten >=1pt,auto,node distance=2cm, semithick]
    \node[state, initial, fill = cyan, label = $\bot$] (C1S) {$\infty$};
    \node[below of = C1S] (C1S1) {};
    \node[below of = C1S1] (C1S2) {};
    \node[state, right of = C1S, label = $\llbracket C_1 \rrbracket_{x_1 = \top} $] (C1X1T) {$\set{x}{1}{\top}$};
    \node[state, below of = C1X1T, anchor = south ,label =below:$\llbracket C_1 \rrbracket_{x_1 = \bot} $] (C1X1F) {$\set{x}{1}{\bot}$};
    \node[state, right of = C1X1T, label = $\llbracket C_1 \rrbracket_{y_1 = \top} $] (C1X2T) {$\set{y}{1}{\top}$};
    \node[state, below of = C1X2T, anchor = south, label = $\llbracket C_1 \rrbracket_{y_1 = \bot} $] (C1X2F) {$\set{y}{1}{\bot}$};
    \node[right of  = C1X2T] (C1E1) {$\dots$};
    \node[right of  = C1X2F, anchor = south] (C1E2) {$\dots$};
    
    
    \node[state, right of = C1E1, fill = green]  (TC1XMT) {$\set{y}{n}{\top}$};
    \node[state, below of = TC1XMT, anchor = south, fill = green] (TC1XMF) {$\set{y}{n}{\bot}$};
    \node[inner sep = 5pt, fit = (TC1XMT)(TC1XMF),draw,loosely dashed,label =$\top$] (C1T) {};
    
    \node[state, right of = TC1XMT, fill = orange] (FC1XMT) {$\set{y}{n}{\top}$};
    \node[state, right of = TC1XMF, fill = orange] (FC1XMF) {$\set{y}{n}{\top}$};
    \node[inner sep = 5pt, fit = (FC1XMT)(FC1XMF),draw,loosely dashed,label =$\bot$] (C1F) {};
    
    \node[inner sep = 0.7cm,fit = (C1S)(FC1XMT)(FC1XMF),draw,loosely dashed] (C1) {};

    
    
    \path (C1S) edge node {$(\top,*)$} (C1X1T);
    \path [dotted] (C1S) edge node [anchor = east] {$(\bot,*)$} (C1X1F);
    \path (C1X1T) edge node {$(*,\top)$} (C1X2T); 
    \path [dotted] (C1X1F) edge node [anchor = north] {$(*,\bot)$} (C1X2F); 
    \path [dotted] (C1X1T) edge node [anchor = north] {$\bot$} (C1X2F);
    \path (C1X1F) edge node [anchor = south]{$\top$} (C1X2T);
    \path[bend left] (C1.east) edge node [anchor =  north west]{$(e,*)$}(CMT);
    \path (CMT) [loop below] edge node {$(*,*)$} (CMT);
    
    \node[below left] at (C1.north east) {$\mathbf{C_1}$};

    \node[state, below of = C1S1, fill = cyan, label = $\bot$] (C2S) {$\infty$};
    \node[below of = C2S] (C2S1) {};
    \node[below of = C2S1] (C2S2) {};
    \node[state, right of = C2S, label =below:$\llbracket C_2 \rrbracket_{x_1 = \top} $] (C2X1T) {$\set{x}{1}{\top}$};
    \node[state, below of = C2X1T, anchor = south ,label =below:$\llbracket C_2 \rrbracket_{x_1 = \bot} $] (C2X1F) {$\set{x}{1}{\bot}$};
    \node[state, right of = C2X1T, label = below:$\llbracket C_2 \rrbracket_{y_1 = \top} $] (C2X2T) {$\set{y}{1}{\top}$};
    \node[state, below of = C2X2T, anchor = south, label = below:$\llbracket C_2 \rrbracket_{y_1 = \bot} $] (C2X2F) {$\set{y}{1}{\bot}$};
    \node[right of  = C2X2T] (C2E1) {$\dots$};
    \node[right of  = C2X2F, anchor = south] (C2E2) {$\dots$};
    \node[below of = C2E2] (C2E3) {$\vdots$};

    \node[state, right of = C2E1, fill = green]  (TC2XMT) {$\set{y}{n}{\top}$};
    \node[state, below of = TC2XMT, anchor = south, fill = green] (TC2XMF) {$\set{y}{n}{\bot}$};
    \node[inner sep = 5pt, fit = (TC2XMT)(TC2XMF),draw,loosely dashed,label =$\top$] (C2T) {};
    
    \node[state, right of = TC2XMT, fill = orange] (FC2XMT) {$\set{y}{n}{\top}$};
    \node[state, right of = TC2XMF, fill = orange] (FC2XMF) {$\set{y}{n}{\top}$};
    \node[inner sep = 5pt, fit = (FC2XMT)(FC2XMF),draw,loosely dashed,label =$\bot$] (C2F) {};
    
    \node[inner sep = 0.7cm,fit = (C2S)(FC2XMT)(FC2XMF),draw,loosely dashed] (C2) {};
    
    \path (C2S) edge node [anchor = north] {$(\top,*)$} (C2X1T);
    \path [dotted] (C2S) edge node [anchor = east] {$(\bot,*)$} (C2X1F);
    \path (C2X1T) edge node {$(*,\top)$} (C2X2T); 
    \path [dotted] (C2X1F) edge node [anchor = north] {$(*,\bot)$} (C2X2F); 
    \path [dotted] (C2X1T) edge node [anchor = north] {$\bot$} (C2X2F);
    \path (C2X1F) edge node [anchor = south]{$\top$} (C2X2T);
    
    \path (C1T) edge node [anchor = west] {~~~~~$(*,*)$} (C2S);
    
    \node[below left] at (C2.north east) {$\mathbf{C_2}$};
    
    
    \node[state, below of = C2S2, fill = cyan, label = $\bot$] (CMS) {$\infty$};

    \node[state, right of = CMS, label =below:$\llbracket C_m \rrbracket_{x_1 = \top} $] (CMX1T) {$\set{x}{1}{\top}$};
    \node[state, below of = CMX1T, anchor = south ,label =below:$\llbracket C_m \rrbracket_{x_1 = \bot} $] (CMX1F) {$\set{x}{1}{\bot}$};
    \node[state, right of = CMX1T, label = below:$\llbracket C_m \rrbracket_{x_2 = \top} $] (CMX2T) {$\set{x}{2}{\top}$};
    \node[state, below of = CMX2T, anchor = south, label = below:$\llbracket C_m \rrbracket_{x_2 = \bot} $] (CMX2F) {$\set{x}{2}{\bot}$};
    \node[right of  = CMX2T] (CME1) {$\dots$};
    \node[right of  = CMX2F, anchor = south] (CME2) {$\dots$};
   
    \node[state, right of = CME1, fill = green]  (TCMXMT) {$\set{y}{n}{\top}$};
    \node[state, below of = TCMXMT, anchor = south, fill = green] (TCMXMF) {$\set{y}{n}{\bot}$};
    \node[inner sep = 5pt, fit = (TCMXMT)(TCMXMF),draw,loosely dashed,label =$\top$] (CMT) {};
    
    \node[state, right of = TCMXMT, fill = orange] (FCMXMT) {$\set{y}{n}{\top}$};
    \node[state, right of = TCMXMF, fill = orange] (FCMXMF) {$\set{y}{n}{\top}$};
    \node[inner sep = 5pt, fit = (FCMXMT)(FCMXMF),draw,loosely dashed,label =$\bot$] (CMF) {};
    
    \node[inner sep = 0.7cm,fit = (CMS)(FCMXMT)(FCMXMF),draw,loosely dashed] (CM) {};
    
    \path [bend left] (CM.east) edge node {~~$(e,*)$} (CMF);
    
    \path (CMS) edge node [anchor = north] {$(\top,d)$} (CMX1T);
    \path [dotted] (CMS) edge node [anchor = east] {$(\bot,d)$} (CMX1F);
    \path (CMX1T) edge node {$(\top,d)$} (CMX2T); 
    \path [dotted] (CMX1F) edge node [anchor = north] {$(d,\bot)$} (CMX2F); 
    \path [dotted] (CMX1T) edge node [anchor = north] {$\bot$} (CMX2F);
    \path (CMX1F) edge node [anchor = south]{$\top$} (CMX2T);
    
    \path (C2T) edge node [anchor = east] {$(*,*)$~~~} (CMS);
    
    \node[below left] at (CM.north east) {$\mathbf{C_m}$};
    
    \path (C2) edge node [anchor = west]{$(e,*)$} (CMF);
    \path (CMF) edge [loop below] node {$(*,*)$} (CMF);
    \path (CMT) edge [loop below] node {$(*,*)$} (CMT);
    
    \end{tikzpicture}
    \caption{Games $G_\psi^{R}$ and $G_\psi^{S}$ are played on the arena $A_\psi$. $\psi$ is satisfiable if and only if a memoryless player $P$ has a winning strategy. If player $P$ plays an $e$ move in $C_1$, then it loses the game.} 
    \label{fig:winf}
\end{figure*}
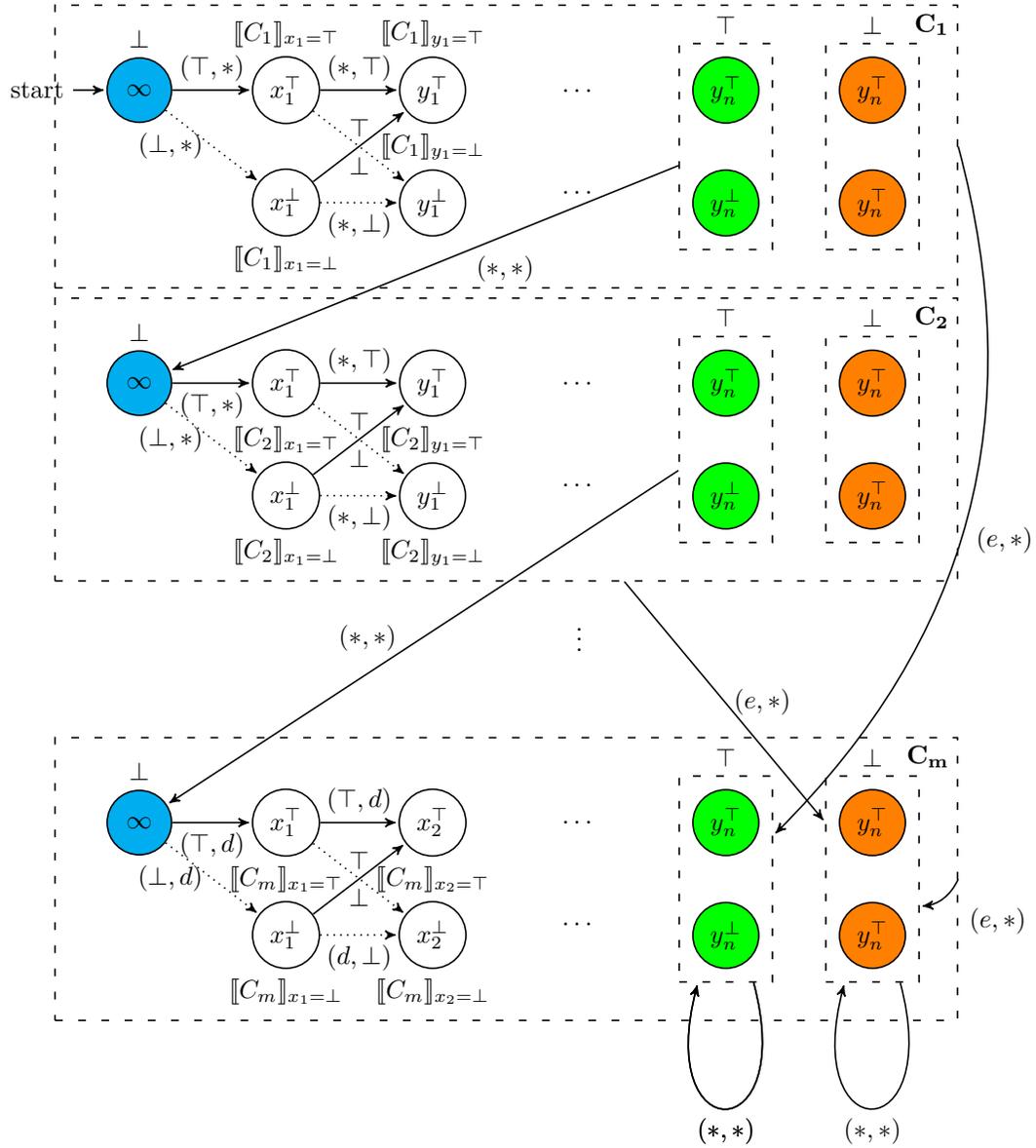
To prove PSPACE-hardness of the WIN$_F$ problem for different objectives, we only need to consider reachability and safety objectives. 
For safety objective for $P$, we consider the game $G_\psi^{S} = (A_\phi, Safe(V \setminus R))$, where $R = \{\left(v,(C_m,\top) \right) | v \in V_1 \}$. Player $F$ wins if the game reaches a state corresponding to satisfaction of all the clauses. The following lemma is a direct consequence of the definition of the game.
\begin{lemma}
$\psi$ is satisfiable if and only if player $F$ has a winning strategy for winning $G_\psi^{S} = (A_\psi,~safe(V \setminus R))$.
\label{lemma:winf_safe}
\end{lemma}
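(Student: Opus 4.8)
The plan is to prove both implications by viewing a single pass through the clauses as one play of the QBF evaluation game for $\psi$, in which the memoryless player $P$ (assigning the $x_j$) plays the role of the $\forall$-player and the full-information player $F$ (assigning the $y_j$) plays the role of the $\exists$-player. Recall that $F$ wins $G_\psi^S=(A_\psi, Safe(V\setminus R))$ exactly when it can force the play out of the safe set, i.e.\ into a state of $R=\{(v,(C_m,\top))\}$; by the structure of $A_\psi$ (Figure~\ref{fig:winf}) the only way to enter $R$ is to traverse rounds $C_1,\dots,C_m$ in turn, each time driving the private flag to $\top$, so $F$ must cause the clause $C_k$ to be satisfied in round $k$. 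The engine of both directions is the \emph{consistency forcing} provided by the exit action $e$: in round $1$ the move $e$ leads to $R$ (transition~1), so a $P$ that wants to stay safe must never play $e$ along the round-$1$ path; in every later round $e$ leads to the sink $(\set{y}{n}{\bot},(C_m,\bot))\in V\setminus R$ (transition~2), so $P$ can punish $F$ with $e$ at any public state it has not been forced to commit to in round~$1$.

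For the forward direction I assume $\psi$ is true and fix an $\exists$-winning (Skolem) strategy $f_1,\dots,f_n$, with $y_j=f_j(x_1,\dots,x_j)$, guaranteeing $\phi$ under every $\forall$-play. Player $F$ uses the full history to read off $P$'s choices $x_1,\dots,x_j$ before selecting $y_j$ at state $\set{x}{j}{*}$, and plays $y_j=f_j(x_1,\dots,x_j)$ throughout round~$1$; since the memoryless $P$ is some $\forall$-strategy, the resulting assignment $(\bar x,\bar y)$ satisfies every clause of $\phi$. In rounds $2,\dots,m$ player $F$ replays exactly the same $y$-moves. Because $P$ is memoryless and the public states revisited are identical to those of round~$1$, $P$ must repeat its round-$1$ responses (and in particular cannot play $e$, as $P$ did not do so in round~$1$), so the same assignment $(\bar x,\bar y)$ is reproduced in each round; as it satisfies $C_k$, the flag reaches $\top$ in round $k$ and the play advances to round $k+1$, finally reaching $(C_m,\top)\in R$. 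If $P$ ever plays $e$, it does so either in round~$1$, which sends the play to $R$ (an immediate $F$ win), or at a public state off the round-$1$ path, which $F$'s consistent replay never visits; hence $F$ wins against every memoryless $P$.

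For the backward direction I argue the contrapositive: if $\psi$ is false I build, for every $F$-strategy, a memoryless $P$ that keeps the play inside $V\setminus R$. Here $\psi$ false means the $\forall$-player wins the evaluation game, i.e.\ there is a $\forall$-strategy under which every $\exists$-response falsifies some clause. I let $P$ realize this $\forall$-strategy on the public states it is forced through and play the exit action $e$ at \emph{every other} public state. The forcing argument then shows that $F$ cannot profit from visiting a fresh public state after round~$1$: any such deviation meets $e$ and drops into the $(C_m,\bot)$ sink, which lies in the safe set, so $F$ gains nothing and is effectively compelled to replay a single consistent assignment across all rounds. That single assignment is one $\exists$-play against the chosen $\forall$-strategy, hence falsifies some clause $C_k$; in round $k$ the final flag is $\bot$ and transition~7 routes the play to $(\set{y}{n}{\bot},(C_m,\bot))\in V\setminus R$, so the play never reaches $R$ and $P$ wins its safety objective. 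The main obstacle is precisely making this forcing rigorous while respecting that $P$ is memoryless: one must verify that the $\forall$-strategy witnessing $\neg\psi$ can be encoded as a memoryless $P$ depending only on the public component, and that the exit action genuinely removes any incentive for $F$ to diverge from a single repeated assignment; once this is established, the equivalence with QBF truth, and thus Lemma~\ref{lemma:winf_safe}, follows. Throughout I may use that plays are effectively finite, since every branch eventually reaches an absorbing sink, so reasoning about finite assignments suffices (cf.\ the round bound of Lemma~\ref{lemma:general_bound}).
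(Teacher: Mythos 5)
Your forward direction is correct and is essentially the paper's argument, done more carefully: $F$ plays a Skolem strategy in round $1$ and replays those moves verbatim in later rounds, and the memoryless $P$, seeing only repeated public states, must reproduce its round-$1$ answers, so either the play reaches a $(C_m,\top)$-state in $R$ or $P$ plays $e$ in round $1$ and lands in $R$ by transition rule 1. The genuine gap is in the backward direction, where you yourself flag ``the main obstacle'' --- verifying that the $\forall$-strategy witnessing $\neg\psi$ can be encoded as a memoryless $P$ --- and then leave it unresolved (``once this is established \dots follows''). This is not a deferrable routine check: taken literally, the claim is false. A $\forall$-winning strategy in the QBF evaluation game in general needs $x_{j+1}$ to depend on all of $y_1,\dots,y_j$, while a memoryless $P$ sees only the current public state, i.e.\ only the index $j$ and the last existential value $y_j$. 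Concretely, for the false formula $\forall x_1 \exists y_1 \forall x_2 \exists y_2 \forall x_3 \exists y_3.\,(y_1 \leftrightarrow x_3)$ there is \emph{no single} memoryless $P$ that defeats every $F$-strategy: once $P$ is fixed, its choice of $x_3$ is a hard-coded function of $y_2$ alone, so $F$ can decide the value of $y_2$ in advance, set $y_1$ to the value of $x_3$ that this will provoke, and win. This is exactly the non-determinacy of these games: ``$F$ has no winning strategy'' does not mean some memoryless $P$ beats all of $F$'s strategies, so no argument that produces one $P$ for all $S_F$ can succeed.

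What the lemma needs, and what the paper's proof does, exploits the quantifier order: \emph{first} fix an arbitrary strategy $S_F$, \emph{then} build a memoryless $P$ beating that particular $S_F$. Restricted to round $1$, $S_F$ induces Skolem functions $y_j = h_j(x_1,\dots,x_j)$; playing the $\forall$-winning strategy against these specific $h_j$ yields one concrete play $p=(x_1^*,y_1^*,\dots,x_n^*,y_n^*)$ whose assignment falsifies some clause $C_k$. Now hard-code this path: $P$ plays $x_1^*$ at $\infty$, plays $x_{j+1}^*$ at the public state $y_j^{v}$ with $v=y_j^*$, and plays $e$ at every public state not on $p$. This is a well-defined memoryless strategy precisely because each public $y_j$-state occurs at most once along $p$; the ``memory'' of $y_1,\dots,y_{j-1}$ is compiled into the strategy, which is legitimate because $S_F$ is fixed. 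Against $S_F$, round $1$ follows $p$ exactly, so $e$ is never played in round $1$; afterwards, either $F$ repeats the same $y$-values in every round, in which case round $k$ ends with its flag still $\bot$ and the play is absorbed in states with private component $(C_m,\bot)$ or $(C_k,\bot)$ (transition rule 7 and the sink), all outside $R$; or $F$ deviates for the first time in some round $l>1$, reaching a public state off $p$, where $P$ plays $e$ and transition rule 2 sends the play to the safe sink $(y_n^{\bot},(C_m,\bot))$. Either way the play never enters $R$, so $S_F$ is not winning. Your sentence ``realize this $\forall$-strategy on the public states it is forced through'' gestures at this construction, but without fixing $S_F$ first and collapsing the $\forall$-strategy to the single induced path, the memorylessness you need cannot be established --- and, as the example shows, is unobtainable in the form you stated it.
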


\begin{proof}
See Appendix.
\end{proof}

\begin{corollary}
The WIN$_F$ problem with parity objective for player $P$ is PSPACE-hard.
\end{corollary}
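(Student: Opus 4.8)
The plan is to derive the parity hardness directly from the safety reduction of Lemma~\ref{lemma:winf_safe}, exploiting the fact that a safety objective embeds into a parity objective whenever the target region is \emph{absorbing}, which is exactly the situation in $A_\psi$. Since TQBF is PSPACE-complete and Lemma~\ref{lemma:winf_safe} already gives a polynomial-time reduction from an arbitrary QBF instance $\psi$ to the game $G_\psi^{S} = (A_\psi, Safe(V \setminus R))$ with $R = \{(v,(C_m,\top)) \mid v \in V^{public}\}$, it suffices to replace the safety winning condition for $P$ by an equivalent parity winning condition on the \emph{same} arena $A_\psi$, without changing which plays are winning for $P$. No new gadget is needed.

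Concretely, I would define the priority function $p : V \to \{0,1\}$ by $p(v) = 1$ for every $v \in R$ and $p(v) = 0$ for every $v \in V \setminus R$, and consider the game $(A_\psi, Parity(p))$. The key structural observation is that every state of $R$ is absorbing: by the transition rules (the $(*,*)$ self-loops on the $(C_m,\top)$ block), once a play enters $R$ it remains in $R$ forever. Hence for any play $\pi \in \plays{A_\psi}$ exactly one of two cases occurs. Either $\pi$ never visits $R$, so $Inf(\pi) \subseteq V \setminus R$ and $\min\{p(v) \mid v \in Inf(\pi)\} = 0$ is even, giving $\pi \in Parity(p)$; or $\pi$ reaches $R$ and is trapped there, so $Inf(\pi) \subseteq R$ and $\min\{p(v) \mid v \in Inf(\pi)\} = 1$ is odd, giving $\pi \notin Parity(p)$. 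Therefore $Parity(p)$ and $Safe(V \setminus R)$ define the same winning set for $P$ over $\plays{A_\psi}$.

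It follows that $F$ has a winning strategy in $(A_\psi, Parity(p))$ if and only if it has one in $G_\psi^{S} = (A_\psi, Safe(V \setminus R))$, and by Lemma~\ref{lemma:winf_safe} this holds if and only if $\psi$ is satisfiable. Since $p$ is computable in time linear in $|V|$ and $A_\psi$ has size $\bigo(m \cdot n)$, the whole construction is a polynomial-time many-one reduction from TQBF, which establishes PSPACE-hardness of WIN$_F$ with a parity objective for $P$. The only point requiring care -- the step I would treat as the main obstacle -- is verifying the exact coincidence of $Parity(p)$ and $Safe(V \setminus R)$ on $\plays{A_\psi}$: this equivalence fails for arbitrary arenas, because parity is an infinitary (limit) condition whereas safety forbids ever reaching a bad state, and it holds here \emph{precisely} because $R$ is a union of absorbing sinks, so ``reaching a bad state'' and ``looping on a bad state forever'' become the same event. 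I would therefore make the absorbing property of $R$ explicit from the transition relation before invoking it, and note that the identical encoding also re-derives Corollary~\ref{corollary:winp} from the reachability case.
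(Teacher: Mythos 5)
Your proposal is correct and follows essentially the same route as the paper: the paper also obtains the parity corollary directly from the QBF-to-safety reduction of Lemma~\ref{lemma:winf_safe}, with the implicit observation that the safety condition $Safe(V \setminus R)$ coincides with a two-priority parity condition on $A_\psi$ because the region $R$ is closed under the transition relation. Your write-up merely makes explicit the priority function and the trap property of $R$ that the paper leaves unstated.
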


To show that WIN$_F$ with reachability objective for player $P$ is PSPACE-hard, we consider the game $G_\psi^R = (A_\phi,Reach(R))$, where $R =  \{(y_n^\top,(C_m,\bot)),(y_n^\bot,(C_m,\bot))\}$. The proof of the next lemma is similar to the previous one.

\begin{lemma}
$\psi$ is satisfiable if and only if player $F$ has a winning strategy for winning $G_\psi^{R} = (A_\psi,~Reach(R))$.
\label{lemma:winf_reach}
\end{lemma}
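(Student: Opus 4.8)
The plan is to \emph{not} repeat the argument of Lemma~\ref{lemma:winf_safe}, but instead to observe that on the arena $A_\psi$ the reachability game $G_\psi^R$ and the safety game $G_\psi^S$ assign the \emph{same} winner to every play; the statement then falls out of Lemma~\ref{lemma:winf_safe} with no further work. Concretely, I would name the two sink regions the \emph{green} region $\mathit{Gr} = \{(v,(C_m,\top)) \mid v \in V^{public}\}$ and the \emph{orange} region $R = \{(\set{y}{n}{\top},(C_m,\bot)),(\set{y}{n}{\bot},(C_m,\bot))\}$, and prove that every play of $A_\psi$ is eventually absorbed in exactly one of them.

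The structural fact to check is this: within the processing of a single clause $C_k$ the public component advances monotonically through $\infty, \set{x}{1}{\cdot}, \set{y}{1}{\cdot}, \dots, \set{y}{n}{\cdot}$ with no cycles, and the only transitions leaving this phase either advance the clause index (transition~6), drop into $R$ (transitions~2 and~7), or land in $\mathit{Gr}$ (transition~1, and the case where $C_m$ is satisfied, which is already a state of $\mathit{Gr}$). Since the clause index strictly increases up to $C_m$ and the only self-loops in $A_\psi$ are the $(*,*)$ loops on $\mathit{Gr}$ and on $R$, every play reaches one of the two regions within $\mathcal{O}(mn)$ steps and then stays there; the regions are disjoint because they differ in the $\{\top,\bot\}$ flag of the private component. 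Carrying out this case analysis over all nine transition schemata is the one routine-but-unavoidable step, and is the main (minor) obstacle of this route.

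Given the dichotomy, for \emph{every} play $\pi$ exactly one of ``$\pi$ reaches $\mathit{Gr}$'' and ``$\pi$ reaches $R$'' holds. Hence the event ``$\pi \models Reach(R)$'' (player $P$'s objective in $G_\psi^R$) coincides with ``$\pi \not\models Safe(V \setminus \mathit{Gr})$'' (player $P$ failing its safety objective in $G_\psi^S$); equivalently, $F$ wins $\pi$ in $G_\psi^R$ (the play avoids $R$) if and only if $F$ wins $\pi$ in $G_\psi^S$ (the play reaches $\mathit{Gr}$). Since both games are played on the same arena and assign identical outcomes play-by-play, a strategy of $F$ is winning in $G_\psi^R$ exactly when it is winning in $G_\psi^S$, so $F$ has a winning strategy in one iff in the other. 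Invoking Lemma~\ref{lemma:winf_safe} then yields that $F$ has a winning strategy in $G_\psi^R$ iff $\psi$ is satisfiable, which is the claim.

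If one instead prefers a self-contained proof mirroring the previous lemma, both directions transfer verbatim: for satisfiable $\psi$, player $F$ plays the Skolem responses $y_i(x_1,\dots,x_i)$ \emph{consistently} across all rounds, which keeps every clause satisfied and, crucially, forces the public states seen at $P$'s turns to recur identically each round, so that a memoryless $P$ can only ever fire its exit move in round~$1$ (where transition~1 sends the play into $\mathit{Gr}$, losing for $P$); thus the play reaches $\mathit{Gr}$ and never $R$. Conversely, if $\psi$ is unsatisfiable, against any $F$-strategy there is an $x$-assignment falsifying some clause, and a memoryless $P$ realizing it as a function of the public state drives the play into $R$. The delicate point in this alternative is precisely the consistency-via-exit argument together with the memorylessness of $P$, which is exactly the content of the appendix proof of Lemma~\ref{lemma:winf_safe}; this is why reducing to that lemma is the cleaner route.
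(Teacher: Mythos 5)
Your proposal is correct in substance but takes a genuinely different route from the paper's. The paper disposes of this lemma in one line --- ``the argument from the proof of Lemma~\ref{lemma:winf_safe}'' --- i.e.\ it re-runs the same two-directional QBF argument with $Reach(R)$ in place of the safety condition. You instead prove a structural dichotomy (every play of $A_\psi$ is absorbed in exactly one of the two sink regions $\mathit{Gr}$ and $R$, since the clause and variable indices advance monotonically and the only cycles are the $(*,*)$ self-loops on the sinks), conclude that $F$ wins a play of $G_\psi^R$ iff $F$ wins it in $G_\psi^S$, and then invoke Lemma~\ref{lemma:winf_safe} as a black box. This buys a proof with no duplication of the QBF reasoning, and it isolates exactly the structural fact that justifies the paper's ``same argument'' remark; the price is the routine case analysis over the nine transition schemata, plus a reliance on every play being infinite and absorbed --- something the paper's construction leaves slightly informal (for instance, no schema covers $P$ playing $e$ at an $\infty$-state of a clause $C_k$ with $k>1$, a gap your case analysis would surface and would need to patch, say by routing such moves to the sink of rule~2). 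One slip to fix: by the dichotomy, ``$\pi \models Reach(R)$'' coincides with ``$\pi \models Safe(V \setminus \mathit{Gr})$'' (player $P$ wins both games or neither), not with ``$\pi \not\models Safe(V \setminus \mathit{Gr})$'' as you first write; your very next clause (``$F$ wins $\pi$ in $G_\psi^R$ iff $F$ wins $\pi$ in $G_\psi^S$'') is the correct statement and is the one your conclusion actually uses, so nothing downstream breaks. Your fallback self-contained argument is precisely the paper's proof, so either route lands you on solid ground.
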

\begin{proof}
The argument from the proof of Lemma \ref{lemma:winf_safe}.
\end{proof}

\section{Limited-memory partial-information player}

In this section, the partial-information player $P$ can use a memory object $M = \{1,2,\ldots,k\}$ for his strategies. Therefore, its strategy $\sigma_P$ is a function $\sigma_P: V_{public} \times M \to \Sigma_P \times M$. Consider a game $G = (A,~Parity(p))$ played on the arena $A = \langle V, \Sigma_P, \Sigma_F, \Delta, \init \rangle$. We construct an equivalent game $G^k = (A^k,~Parity(p^k))$ played between a memoryless player $P$ and player $F$. In the arena $A^k$, all the uses of memory by player $P$ is encoded in the states. The main idea is that player $F$ forces player $P$ to exhaust all possible uses of memory for every public component, further this can be encoded in the game.
Formally $A^k = \langle V',\Sigma_P',\Sigma_F', \Delta', \init' \rangle$ where,
\begin{itemize}
    \item  The set of states of the arena $V' = (V^{public}) \times (V^{private}) \times M$. 
    
    \item $\Sigma_P' = \Sigma_P \times M$ and $\Sigma_F' = \Sigma_F$.
    \item The initial state $\init' = (v_0^{public},v_0^{private},k)$.
    
    \item The set of transitions $\Delta'$ is,
    \item $(u,v,i) \xrightarrow{((x,j),y)} (u',v,j)$, 
    where $\Delta \left( (u,v),(x,y) \right) = (u',v')$.
\end{itemize}

We only consider parity objectives for the partial information player, as they can encode both reachabilty and safety. Let $Parity(p)$ be the parity objective of the partial-information player. 
$p^k\big((u,v,i)\big) = p\big(\left(u,v\right)\big)$ is the priority function $p^k$ for the new arena. In the new arena, the priority of a state is the corresponding priority in the original arena. The following lemma is straight forward.

\begin{lemma}
A bounded-memory player $P$ has a winning strategy against player $F$ in the game $G$ if and only if a memoryless $P$ has a winning strategy against player $F$ in the game $G^k$ defined above.
\end{lemma}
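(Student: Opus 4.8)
The plan is to exploit the fact that the memory component is placed in the \emph{public} part of $A^k$, so that a memoryless strategy in $G^k$ and a bounded-memory strategy in $G$ are literally functions of the same type. Concretely, taking ${V'}^{public} = V^{public} \times M$ and ${V'}^{private} = V^{private}$, a memoryless strategy for $P$ in $G^k$ is a function ${V'}^{public} \to \Sigma_P'$, that is $V^{public} \times M \to \Sigma_P \times M$, which is exactly the shape of a bounded-memory strategy $S_P$ in $G$. I therefore identify the two: given $S_P$ in $G$, let $S_P'$ be the \emph{same} function read as a memoryless strategy in $G^k$, and symmetrically in the other direction. This identification is a bijection between the bounded-memory strategies of $P$ in $G$ and the memoryless strategies of $P$ in $G^k$, and it respects the initial memory, since $\init' = (v_0^{public}, v_0^{private}, k)$ fixes $P$'s starting memory to $k$, matching the convention that $S_P$ is first queried with memory $k$.

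First I would set up a correspondence between plays. Fix $S_P$ in $G$ and the identified $S_P'$ in $G^k$. I claim the projection $(u_i, v_i, m_i) \mapsto (u_i, v_i)$ that forgets the memory coordinate is a bijection from the plays of $G^k$ consistent with $S_P'$ onto the plays of $G$ consistent with $S_P$, and that it preserves the priority sequence. The inverse ``lift'' annotates a play of $G$ with the memory states that $S_P$ threads along the visible history; since $S_P(u, m) \downarrow 2$ depends only on the public component and the current memory, the memory sequence is a \emph{deterministic} function of the public history and carries no additional choice. By the definition of $\Delta'$, at each round the memory update $j$ in $P$'s move $(x,j)$ is the one dictated by $S_P'$, so the only branching in either game is player $F$'s choice from $\Sigma_F = \Sigma_F'$; the two play trees are thus isomorphic as $F$-branching trees. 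Finally, because $p^k\big((u,v,m)\big) = p\big((u,v)\big)$, projection preserves the sequence of priorities and hence membership in $Parity(p^k)$ versus $Parity(p)$.

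From this bijection the lemma follows in both directions: $S_P$ defeats every $S_F$ in $G$ iff every $S_P$-consistent play of $G$ satisfies $Parity(p)$, iff (by the priority-preserving bijection) every $S_P'$-consistent play of $G^k$ satisfies $Parity(p^k)$, iff $S_P'$ defeats every $S_F'$ in $G^k$; combining with the strategy bijection of the first paragraph gives the equivalence of the two existential statements.

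\textbf{The main obstacle} I anticipate is the information asymmetry of the full-information player: in $G^k$ the memory coordinate is part of the state and is therefore \emph{visible} to $F$, whereas in $G$ the memory is internal to $P$ and never appears in $V$. One must argue that this extra visibility grants $F$ no additional power, or the forward direction ($P$ wins in $G$ $\Rightarrow$ $P$ wins in $G^k$) could fail. The resolution, which the play-bijection encodes, is that once $P$'s strategy is fixed the memory is a deterministic function of the public history; since $F$ already has full information and unbounded memory in $G$, any $S_F'$ in $G^k$ that reacts to the memory coordinate can be mirrored by an $S_F$ in $G$ that recomputes that coordinate from the visible history using $S_P$. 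I would make this explicit by checking that the recurrence $m_i = S_P(u_{i-1}, m_{i-1}) \downarrow 2$ is determined along any consistent play, so that $F$'s branching sets coincide at corresponding positions and no genuine choice is lost or gained under projection.
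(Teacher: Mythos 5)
Your proof is correct and matches the argument the paper intends: the paper states this lemma without proof (calling it straightforward), and the intended justification is exactly your identification of bounded-memory strategies in $G$ with memoryless strategies in $G^k$ via the memory value stored in the state, together with the priority-preserving bijection between $S_P$-consistent plays and $S_P'$-consistent plays. Your explicit treatment of the two points the paper glosses over---that $M$ must sit in the public component, ${V'}^{public} = V^{public} \times M$, and that $F$'s visibility of the memory coordinate in $G^k$ adds no power because, once $S_P$ is fixed, the memory is a deterministic function of the public history that an unbounded full-information $F$ can recompute in $G$---is accurate and fills in precisely what the paper leaves implicit.
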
  

Now as a consequence of Corollary \ref{corollary:winp} the existence of a winning strategy for player $P$ is also NP-complete.
\begin{corollary}
If player $P$ can use bounded memory, WIN$_P$ is NP-complete.
\end{corollary}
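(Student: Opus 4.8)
The plan is to establish the two halves of NP-completeness separately, leaning on the results already proved for the memoryless case together with the equivalence between $G$ and $G^k$.

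For membership in NP, I would exhibit a polynomially checkable certificate. The natural certificate is the bounded-memory strategy itself, i.e.\ a function $\sigma_P : V^{public} \times M \to \Sigma_P \times M$; since $M = \{1,\dots,k\}$ and $k$ is given in unary, this object has size polynomial in the input. Rather than verifying it directly on $G$, I would invoke the equivalence lemma relating $G$ and $G^k$: a bounded-memory $\sigma_P$ in $G$ corresponds to a memoryless strategy in the product arena $A^k$, whose state space $V' = V^{public}\times V^{private}\times M$ has size $|V|\cdot k$, again polynomial under a unary encoding of $k$. Fixing $P$'s (memoryless) moves in $G^k$ collapses the game to a one-player parity game for $F$, which can be solved in polynomial time; this is exactly the verification procedure underlying Lemma~\ref{winp:hard} applied to $G^k$. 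Recall that parity objectives subsume reachability and safety, so this single argument covers all three objectives. Hence the guessed strategy can be checked in polynomial time and WIN$_P \in \mathit{NP}$.

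For NP-hardness I would observe that the memoryless case is precisely the special case $k=1$ of the bounded-memory problem: a memoryless strategy $V^{public}\to\Sigma_P$ is a bounded-memory strategy with $|M|=1$. The SAT-reduction behind Corollary~\ref{corollary:winp} produces instances in which $P$ is memoryless, and these are legitimate inputs to the bounded-memory WIN$_P$ problem with memory bound $k=1$. Since $\phi$ is satisfiable if and only if $P$ wins the corresponding game, and this statement holds verbatim when $P$ is allowed a single memory state, the NP-hardness of Corollary~\ref{corollary:winp} transfers immediately to the bounded-memory setting.

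The only real subtlety, and the step I would be most careful about, is the encoding of the memory bound $k$. The membership argument rests on $G^k$ (equivalently, the domain of the certificate) being of polynomial size, which requires $k$ to be presented in unary. Were $k$ given in binary, the arena $A^k$ could blow up exponentially, and indeed---as remarked in the introduction---the problem would then inherit the EXPTIME-hardness of general incomplete-information games rather than remaining in NP. Thus the proof is a short composition of Lemma~\ref{winp:hard}, Corollary~\ref{corollary:winp}, and the equivalence between $G$ and $G^k$, with the unary-encoding hypothesis doing the essential work of keeping the reduction polynomial.
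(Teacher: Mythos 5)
Your proof is correct and takes essentially the same route as the paper: NP-membership via the correspondence between bounded-memory strategies in $G$ and memoryless strategies in the polynomial-size product game $G^k$ (with $k$ encoded in unary, so Lemma~\ref{winp:hard} applies), and NP-hardness by noting that the memoryless case of Corollary~\ref{corollary:winp} is exactly the instance $k=1$. You merely spell out the membership certificate and the unary-encoding caveat more explicitly than the paper does, which is a faithful elaboration rather than a different argument.
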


Next, we show that WIN$_F$ remains PSPACE-hard even when player $P$ can use bounded memory. For every game $G = (A,Parity(p))$ played between a bounded-memory player $P$ and $F$, we construct an equivalent game $G_k = (A_k,Parity(p_k))$ played between a bounded-memory player $P$ and $F$.  In the game $G_k$, player $P$ has $k$ unique memory values, therefore for every $u \in V_{public}$ it can make at most $k$ different choices from its actions. However, it cannot make different choices since the game will reach a winning sink for player $F$. The arena $A_k = \langle V_k,\Sigma_P',\Sigma_F',\Delta',v_0' \rangle$ where,
\begin{itemize}
    \item The set of states $V_k = V^{public} \times (V^{private})' \cup \{W\}$, where $(V^{private})' = V^{private} \times C \times R$. $C = \{0,\dots,k\}$. It is a counter used by player $F$ to exhaust all the options of memory that player $P$ can use at any state. Further $R = \Sigma_P$. Player $F$ uses $R$ to record the move used by player $P$ when the counter $C$ has the value $0$. $W$ is a state that is winning for player $F$, once the game reaches $W$ 
    \item The initial state $v_0'= (u_0,(v_0,0,\bar{x}))$, where $(u_0,v_0)$ is the initial state of $A$ and $\bar{x}$ is some fixed move in $R$.
    \item The actions $\Sigma_P' = \Sigma_P \times \{1,\dots,k\}$ and $\Sigma_F' = \Sigma_F$.
    \item In the below set of transitions, $\Delta((u,v),(x,y)) = (u',v')$. The set of transitions $\Delta'$ is,
    \begin{enumerate}
        \item $(u,(v,j,x)) \xrightarrow{((x,j),y)} (u,(v,j+1,x))$, if $1 \leq j < k-1$.
        \item $(u,(v,0,*)) \xrightarrow{((x,1),y)} (u,(v,1,x))$.
        \item $(u,(v,k-1,x)) \xrightarrow{((x,0),y)} (u',(v',0,\bar{x}))$.
        \item $(u,(v,j,x)) \xrightarrow{(e,*)}  W$ if $1 \leq j<k-1$ and $e \neq (x,j+1)$
        \item $(u,(v,k-1,x)) \xrightarrow{(e,*)}  W$ if $e \neq (x,0)$
    \end{enumerate}
    \item The priority function for the arena $p_k(u,(v,*,*)) = p(u,v)$ and $p_k(W) = 1$.
\end{itemize}

\begin{lemma}
In the game $G_k$, player $F$ has a winning strategy if and only if it also has winning strategy in $G$ against a memoryless $P$.
\end{lemma}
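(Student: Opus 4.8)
The plan is to prove both implications through a single structural device, which I will call the \emph{consistency claim}: against any bounded-memory strategy of $P$ in $G_k$, either the play eventually enters the sink $W$, or $P$ effectively plays a memoryless strategy. Since $W$ is absorbing with $p_k(W)=1$, reaching it makes the least priority seen infinitely often odd, so $W$ is losing for $P$ (a win for $F$) under $Parity(p_k)$. The engine forcing the dichotomy is the gadget built from the counter $C=\{0,\dots,k\}$ and the record $R=\Sigma_P$, neither of which lies in $V^{public}$, so $P$ cannot observe them. While the counter runs through $0,1,\dots,k-1$ the public component is frozen at some $u\in V^{public}$, so $P$ sees $u$ for $k$ consecutive micro-steps, and transition (3) performs exactly one genuine move of $A$ per cycle. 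The action alphabet $\Sigma_P'=\Sigma_P\times\{1,\dots,k\}$ pins the \emph{declared} memory index to the counter, so by transitions (4)--(5) a surviving block forces $P$ to assume all $k$ of its memory values in turn at the frozen state $u$, and to answer each with the single action $x$ recorded at counter value $0$. Because a bounded-memory strategy $\sigma_P\colon V^{public}\times M\to\Sigma_P\times M$ deterministically maps each memory value to an action, surviving the block at $u$ forces $\sigma_P(u,m)\!\downarrow\!1=x$ for \emph{every} $m\in M$; hence the response at $u$ is genuinely memoryless, and a $W$-free play of $G_k$ projects (by deleting the $C,R$ coordinates and collapsing each length-$k$ block) onto a play of $G$ in which $P$ follows the induced memoryless strategy $\hat\sigma(u)=x$.

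The second ingredient is that this projection preserves the parity outcome. Since $p_k\big(u,(v,*,*)\big)=p(u,v)$, every state of a block sitting over $(u,v)$ carries the common priority $p(u,v)$, so the set of priorities occurring in $Inf$ of a $W$-free $G_k$-play equals the set occurring in $Inf$ of its projection. Therefore the $G_k$-play satisfies $Parity(p_k)$ if and only if the projected $G$-play satisfies $Parity(p)$, while plays that reach $W$ are always losing for $P$. With these two facts the two directions are bookkeeping.

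For ``$F$ wins $G$ against memoryless $P$ $\Rightarrow$ $F$ wins $G_k$'', I would take a winning $\tau$ for $F$ in $G$ and define $F$'s strategy $\tau_k$ in $G_k$ to be arbitrary on the counter-incrementing micro-steps (where $F$'s choice is immaterial to the successor) and to play, at each transition-(3) step, the move $\tau$ prescribes for the projected $G$-history. Against any bounded-memory $P$: if $P$ ever breaks consistency the arena itself drives the play to $W$ via (4)--(5) and $F$ wins; otherwise the consistency claim yields a memoryless $\hat\sigma$ whose $G$-projection is $outcome(A,\hat\sigma,\tau)$, losing for $P$, and priority preservation transfers the loss back to $G_k$. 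For the converse, given a winning $\tau_k$ in $G_k$ and an arbitrary memoryless $\sigma$ in $G$, I define $F$'s $G$-strategy $\tau$ by simulation: each real $P$-move $\sigma(u)$ is expanded into the unique $W$-free block that plays $\sigma(u)$ with the counter-locked indices, $\tau_k$ is consulted at the transition-(3) step, and its move is played in $G$. Since $\sigma$ is memoryless the simulated $G_k$-play never triggers $W$, so it is a $W$-free outcome of $\tau_k$ and hence losing for $P$; as its projection is exactly $outcome(A,\sigma,\tau)$, priority preservation shows $\tau$ beats every memoryless $\sigma$.

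The main obstacle, and the step needing the most care, is the consistency claim itself, specifically the quantitative matching of the $k$ counter values against the $k$ memory values of $P$. I must verify that the index-to-counter locking in transitions (1)--(5) forces $P$ to exhibit its action at $u$ for \emph{all} of its memory values within one frozen block, so that ``passing the check'' is not merely per-block consistency but genuinely collapses $\sigma_P(u,\cdot)$ to a constant; this is what prevents $P$ from answering $u$ with one action in one visit and a different action in a later visit. Once this lockstep/exhaustion argument is pinned down, the projection and priority-preservation steps are routine.
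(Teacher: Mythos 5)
Your proposal is correct, and it is the argument the paper intends: the paper in fact states this lemma with no proof at all (none appears in the appendix either), and its surrounding prose only says that $P$ ``cannot make different choices since the game will reach a winning sink,'' which is exactly your consistency dichotomy. The one step you leave flagged --- that surviving a block forces $P$ through all $k$ of its memory values --- does close, using ingredients you already name: to survive a block at a frozen public state $u$, the declared indices played at counter values $0,1,\dots,k-1$ must be $1,2,\dots,k-1,0$, which are pairwise distinct; since the entire action, declared index included, is the deterministic image $\sigma_P(u,m_t)\downarrow 1$ of the current memory value, equal memory values at two micro-steps of the block would produce equal declared indices; hence $m_0,\dots,m_{k-1}$ are pairwise distinct, and since $|M|=k$ they exhaust $M$, whereupon transitions (4)--(5) pin the $\Sigma_P$-component of $\sigma_P(u,m)$ to the recorded $x$ for \emph{every} $m\in M$. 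Note that constancy over all of $M$ (not merely over the memory values of a single visit) is also what makes your induced $\hat\sigma(u)$ well defined when $u$ is re-entered later with a different starting memory, so your projection step is sound. The remaining pieces --- priority preservation (every state of a block over $(u,v)$ carries priority $p(u,v)$, so the minimal priority seen infinitely often is unchanged by collapsing blocks) and the two simulation directions, including lifting a memoryless $\sigma$ to the counter-tracking memory-$k$ strategy that never reaches $W$ --- are correct as written. One reading convention you adopt implicitly and should state: the paper's transition relation leaves moves with declared index different from $1$ at counter value $0$ undefined, and these must also be read as leading to $W$ for your dichotomy ``any deviation reaches $W$'' to be literally true; this is a defect of the paper's construction, not of your argument.
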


\begin{corollary}
WIN$_F$ is PSPACE-hard.
\end{corollary}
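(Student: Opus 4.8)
The plan is to show that the counter--register gadget built into $A_k$ co-opts $P$'s memory so completely that, on pain of losing immediately, $P$ is forced to play a strategy whose real (non-gadget) moves depend only on the public component, i.e.\ a memoryless strategy of $G$. Once this is established, both implications of the lemma follow by a simulation that projects $W$-free plays of $G_k$ onto plays of $G$ and lifts strategies back. I would prove the two directions separately: $(\Leftarrow)$ every $F$-strategy winning $G$ against a memoryless $P$ lifts to an $F$-strategy winning $G_k$ against a bounded-memory $P$, and $(\Rightarrow)$ conversely.

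First I would isolate the central invariant. Between two consecutive \emph{real} transitions (type 3, fired only when the counter reaches $k-1$), the pair $(u,v)$ stays fixed while the counter $C$ runs through $0,1,\dots,k-1$; the forced memory-update transitions make $P$'s memory track the counter, and the register $R$ latches the move $P$ makes at counter $0$. Transitions 4--5 send the play to the sink $W$ the instant $P$'s real move or its forced memory update disagrees with the latched value. Hence, for a fixed $\sigma_P$, the play stays $W$-free forever if and only if the real-move component of $\sigma_P(u,\cdot)$ is constant in the memory argument for every public state $u$ that is actually reached; such a $\sigma_P$ projects to a well-defined memoryless strategy $\hat\sigma_P$ of $G$, and conversely every memoryless $\hat\sigma_P$ lifts to such a \emph{consistent} $\sigma_P$. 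Because $W$ is absorbing with odd priority $p_k(W)=1$, reaching it makes $P$ lose; and because $p_k(u,(v,\cdot,\cdot)) = p(u,v)$ and each real state is merely repeated $k$ times inside its gadget, a $W$-free play of $G_k$ and its projection to $G$ visit exactly the same priorities infinitely often, so one satisfies $Parity(p_k)$ iff the other satisfies $Parity(p)$.

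With the invariant in hand the two simulations are direct. For $(\Leftarrow)$, let $S_F$ win $G$ against memoryless $P$; in $G_k$ let $F$ play arbitrarily at the gadget steps (counters $0,\dots,k-2$, which never alter $(u,v)$) and, at each real step, apply $S_F$ to the projected history. Against any $\sigma_P$: if $\sigma_P$ is inconsistent on a reached public state the play hits $W$ and $F$ wins; if it is consistent the play is $W$-free and projects to a play of $G$ in which $P$ uses $\hat\sigma_P$ and $F$ uses $S_F$, which $F$ wins, so priority preservation transfers the win to $G_k$. For $(\Rightarrow)$, let $S_F'$ win $G_k$; given any memoryless $\hat\sigma_P$ of $G$, its consistent lift $\sigma_P$ yields a $W$-free play against $S_F'$, and I define $F$'s strategy in $G$ by reconstructing the (deterministic) gadget-expanded $G_k$-history from the $G$-history and reading off the real move $S_F'$ prescribes; priority preservation again carries the win back to $G$. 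Combining this lemma with the memoryless PSPACE-hardness established via the QBF reduction on $A_\psi$ yields the corollary.

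The main obstacle is the first step---characterizing exactly which $\sigma_P$ avoid $W$---since it requires reading the forced-memory bookkeeping of transitions 1--5 precisely: the memory component of $P$'s action must be constrained to follow the counter so that the register comparison genuinely ranges over all $k$ memory values, and one must verify that a truly memory-dependent real strategy cannot sneak a differing move past the register before the next real transition fires. Everything afterward---the bijection between consistent $\sigma_P$ and memoryless $\hat\sigma_P$, and the priority-preservation accounting---is mechanical.
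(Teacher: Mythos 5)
Your proposal is correct and follows essentially the same route as the paper: it uses the paper's own counter--register construction $G_k$, proves the paper's intermediate lemma (that $F$ wins $G_k$ against a bounded-memory $P$ iff $F$ wins $G$ against a memoryless $P$) via the forced-consistency invariant, and then invokes the QBF-based PSPACE-hardness of the memoryless case. The only difference is that you supply the details of the equivalence lemma, which the paper states without proof, and your treatment of the pigeonhole argument forcing $P$'s memory to track the hidden counter is exactly the intended mechanism.
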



\section{Lower bound on memory for the partial-information player} 
In this section, we prove a lower bound on the memory required for the partial-information player to retain winning strategies. As the order of play does not matter without loss of generality, we will assume that the full-information player $F$ plays first. In this case, strategy for player $F$ is a function $\pi: V^+ \to \Sigma_F$. Strategy for player $P$ is a function $\sigma: V^{public} \times M \times \Sigma_F \to \Sigma_P \times M$. $M$ is the memory object available to player $P$ and $|M| = k$. $p_n$ denotes the $n^{th}$ prime number. 

\noindent \underline{Remark}: $p_n < n^2$.

\subsection*{The prime-remainder game.}
Player $P$ and player $F$ play a game in two stages. In the first stage, player $F$ chooses a prime number number $p \in \{p_1,\dots, p_n\},$ by playing an invisible move $p$ and number 
$N \in \{0,1,\dots,\prod_i^n p_i\}.$ A number $N$ can be chosen by player $F$ by making the move $s$ $N$ times. After this player $F$ reveals the prime number $p$ to player $P$ by playing the move $p$ publicly. The remainder $r = N~mod~p$ is stored in the invisible component $V^{private}$. In the second stage, player $P$ has to guess the remainder $r$ correctly by playing the move $f$ correct number of times then playing an $s$. The arena is presented in Figure \ref{fig:lowerbound1}. Player $F$ wins the game if the game reaches a state (corresponding to player $P$ incorrectly guessing the remainder) in $F = \{ (S,v_2)~|~v_2 \text{ is colored blue}\} \cup \{ (F,X)\}.$

\begin{figure*}[!htb]
\begin{subfigure}{.3\textwidth}
\begin{tikzpicture}[->,>=stealth',shorten >=1pt,auto,node distance=2.2cm, semithick]

\node[state, initial, fill = cyan] (S) {$S$};
\node[state, fill = orange] (F) [below of = S] {$F$};
  
\path (S) edge [loop above] node {$(s,*)$} (S);
\path (S) edge node [anchor = east] {$(p,*)$} (F);
\path (F) edge [bend right] node [anchor = west] {$(*,s)$} (S);
\path (F) edge [loop below] node {$(*,f)$} (F);  
  
\node[inner sep = 1.6cm,fit = (S)(F),draw,loosely dashed] (vis){};
\node[above left] at (vis.south east) {$\mathbf{V^{public}}$}; 
\end{tikzpicture}
\end{subfigure}
\begin{subfigure}{.2\textwidth}
\begin{tikzpicture}[->,>=stealth',shorten >=1pt,scale=0.6,node distance=1.6cm, semithick]

\node[initial,state] (N50){$0$};
\node[right of = N50] (N) {};
\node[state,right of = N] (N51) {$1$};
\node[right of = N51] (E) {$\ldots$};
\node[state,right of = E] (N55) {$p_n-1$};

\path (N50) edge node [anchor = south]{$(s,*)$} (N51);
\path (N55) edge [bend right] node [anchor = north] {$(s,*)$} (N50);

\node[below of = N50] (E1) {};

\node[state, below of = E1, fill = orange] (B) {$X$};
\node[state, right of = B] (RP0) {$0$};
\node[state, right of = RP0, fill = cyan] (RP1) {$1$};
\node[right of = RP1] (RP2) {$\ldots$};
\node[state, right of = RP2, fill = cyan] (RP5) {$p_n-1$};

\path (RP0) edge node [anchor = north] {$(*,f)$} (B);
\path (RP1) edge node [anchor = north] {$(*,f)$} (RP0);

\path (N50) edge node [anchor = west] {$(p,*)$} (RP0);
\path (N51) edge node [anchor = west] {$(p,*)$} (RP1);
\path (N55) edge node [anchor = west] {$(p,*)$} (RP5);

\path (RP0) edge [loop below] node [anchor = north] {$(*,s)$} (RP0);
\path (RP1) edge [loop below] node [anchor = north] {$(*,s)$} (RP1);
\path (RP5) edge [loop below] node [anchor = north] {$(*,s)$} (RP5);

\node [below of = B] (B1) {};
\node [below of = B1, state] (P1) {$p_1$};
\node [right of = P1, state] (P2) {$p_2$};
\node [right of = P2, state] (P3) {$p_3$};
\node [right of = P3] (PE) {$\dots$};
\node [right of = PE, state] (PN) {$p_n$};

\node [draw, fit = (P1)(PN), inner sep = 2pt,loosely dashed] (PRIMES) {};

\node[inner sep = 1.3cm, fit = (N50)(RP5),draw,loosely dashed](invis1){};

\node [draw, fit = (PRIMES)(invis1), inner sep = 10pt,loosely dashed] (invis) {};
\node[above left] at (invis.south east) {$\mathbf{V^{private}}$};
\end{tikzpicture}
\end{subfigure}

\caption{Prime-remainder game. Player $F$ can choose a prime number $p \in \{p_1 \dots p_n \}$ by playing an invisible move. It reveals the number $N$ by playing $s$ $N$ times. The remainder $N~mod~p$ is stored in the invisible component. Then prime number $p$ is revealed by player $F$. Now player $P$ has to guess the remainder correctly by playing $f$ suitable number of times. Player $F$ wins the game, if player $P$ guesses the number incorrectly.}
\label{fig:lowerbound1}
\end{figure*}
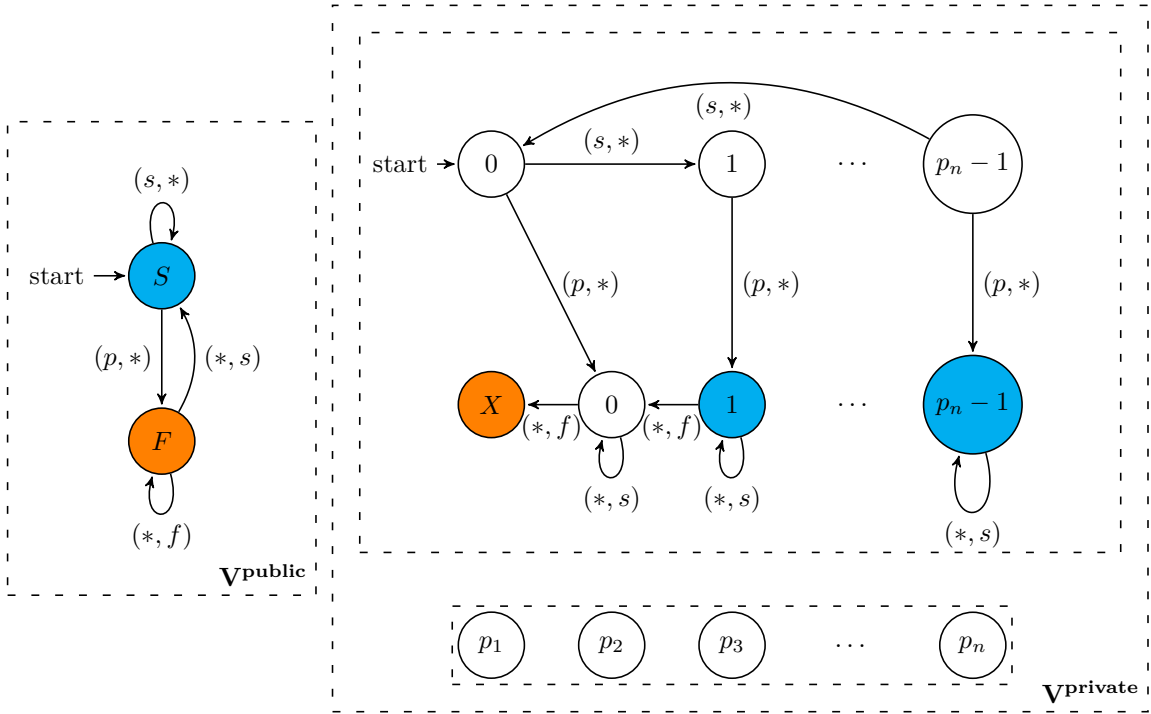

\subparagraph*{Analysis of the prime-remainder game.}

\begin{theorem}
Player $P$ always has a winning strategy when it can use unlimited memory.
\end{theorem}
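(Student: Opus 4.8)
The plan is to exhibit an explicit winning strategy for $P$ and argue it defeats every strategy $\pi$ of $F$. The governing idea is that unlimited memory lets $P$ reconstruct the number $N$ that $F$ commits to during the first stage, even though the chosen prime stays hidden until it is revealed; once $p$ is public, $P$ simply reads off $N \bmod p$ from its memory.

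First I would describe the strategy. During the first stage the public component stays at $S$ while $F$ builds up $N$ by playing $s$; with unlimited memory $P$ maintains a counter $c \in \{0,1,\dots,\prod_{i=1}^n p_i\}$ in $M$ that it increments by one on every round spent at $S$, so that after $F$ has played $s$ exactly $N$ times we have $c = N$. Equivalently, $P$ can store the whole residue vector $(N \bmod p_1, \dots, N \bmod p_n)$, which ranges over $\prod_{i=1}^n p_i$ values (the primorial $n\#$) and is therefore representable with unlimited memory; this is exactly the quantity the next section shows cannot be compressed. The crucial point is that $P$ need not know $p$ while counting: it records information about $N$ for \emph{all} primes at once, deferring the choice of which residue matters until the reveal.

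Next I would treat the reveal and the second stage. When $F$ plays the public move $p$ the public component moves $S \to F$; at this instant $P$ observes the revealed prime and computes $r \defeq c \bmod p = N \bmod p$ from its memory. In the second stage $P$ then plays $f$ exactly $r$ times and afterwards plays $s$. To verify this wins, note that on entering the second stage the hidden component holds the true remainder, i.e.\ the private state $r$. Each $f$ decrements it, so after $r$ moves the private state is $0$ (the uncolored state, distinct from $X$) while the public component is still $F$; the path $r \to r-1 \to \dots \to 0$ never applies $f$ at state $0$, so it never reaches $X$ and never visits a state $(F,X)$. The final $s$ sends the public component $F \to S$ while the self-loop on $s$ keeps the private component at $0$, landing in $(S,0)$. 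Since $0$ is not colored blue and $0 \neq X$, this state lies outside $F = \{(S,v_2)\mid v_2\text{ blue}\}\cup\{(F,X)\}$, so $P$ avoids $F$ forever and wins against the given $\pi$.

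The step I expect to be the crux is justifying that $P$ can recover $r$ at all: because the public component is constant throughout the first stage, $P$ cannot read $N$ off the public state, and it learns which prime is relevant only at the reveal. The resolution is precisely the use of unlimited memory — $P$ counts the rounds (equivalently, tracks every residue $N \bmod p_i$ simultaneously) and postpones selecting the relevant residue until $p$ becomes public. Everything else is a routine check that the \emph{decrement-then-declare} path through the arena terminates in the unique safe state $(S,0)$, and that overshooting (one extra $f$) or declaring early (an $s$ at a blue remainder) are the only deviations, both of which $P$'s strategy avoids.
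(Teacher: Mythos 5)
Your proof is correct and follows the same approach as the paper's: use the unlimited memory to record $N$ (equivalently, all residues $N \bmod p_i$) during the first stage, then read off the correct remainder once the prime is revealed and play $f$ that many times followed by $s$. The paper's proof is just a two-sentence sketch of this idea; your version adds the explicit verification that the resulting play ends in the safe state $(S,0)$ and avoids the losing set.
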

\begin{proof}
Player $P$ can record the number $N$ itself by using a memory structure with $\prod_i^n p_i$ distinct states. Then player $P$ has a winning strategy against all strategies of player $F$.
\end{proof}

\noindent \underline{Remark:} $n\# = \prod_i^n p_i = e^{(1+o(1))n}$.

\begin{lemma}
Player $P$ needs a memory object with at least that $n\#$ distinct states to win.
\end{lemma}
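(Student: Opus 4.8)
The plan is to combine a pigeonhole argument on player $P$'s memory during the first stage with the Chinese remainder theorem. The essential observation is that while player $F$ is playing $s$ to fix the number $N$, player $P$ receives exactly the \emph{same} observation at every step: the public component stays at $S$ (its self-loop on $s$), and the prime $p$ is selected invisibly, so nothing visible to $P$ depends on either $N$ or $p$. Consequently, once a strategy $\sigma$ for $P$ is fixed, the memory state reached after $F$ has played $s$ exactly $t$ times is a value $m_t \in M$ that is a deterministic function of $t$ alone. First I would make this precise by setting $m_0$ to the initial memory and defining $m_{t+1}$ as the memory update prescribed by $\sigma$ on the single repeated first-stage observation, so that $m_t$ is well defined and independent of the invisible choice of $p$.

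Next I would invoke pigeonhole. Suppose for contradiction that $|M| = k < n\#$ and that $\sigma$ is winning. Consider the $n\#$ values $m_0, m_1, \dots, m_{n\#-1}$. Since $k < n\#$, two of them coincide, so there are $a < b$ in $\{0,1,\dots,n\#-1\}$ with $m_a = m_b$. Because $a$ and $b$ lie in a window of length $n\#$, we have $a \not\equiv b \pmod{n\#}$, and hence by the Chinese remainder theorem there is a prime $p_i \in \{p_1,\dots,p_n\}$ with $a \bmod p_i \neq b \bmod p_i$.

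I would then derive the contradiction by exhibiting a winning strategy for $F$. Let $F$ secretly fix the prime $p_i$, play $s$ either exactly $a$ times or exactly $b$ times, and then reveal $p_i$ publicly. In both scenarios player $P$ enters the second stage with the identical memory state ($m_a = m_b$) and receives the identical public observation (the reveal of $p_i$); since $\sigma$ is deterministic and $P$'s own first-stage moves cannot alter the public state, player $P$ plays the same sequence of moves in the second stage and therefore submits the same guessed remainder $g$ in both cases. But the correct remainders are $a \bmod p_i$ and $b \bmod p_i$, which differ, so $g$ can equal at most one of them. Hence in at least one of the two scenarios player $P$ guesses incorrectly, which by definition is a win for player $F$, contradicting that $\sigma$ wins against every strategy of $F$. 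Therefore $k \geq n\#$.

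The steps needing the most care are (i) arguing that the second-stage play depends only on the memory carried over from the first stage together with the revealed prime, so that a first-stage memory collision genuinely collapses the two plays, and (ii) ensuring the collision is found among the values $0,\dots,n\#-1$, so that the two candidate values of $N$ are \emph{incongruent} modulo $n\#$ rather than merely distinct integers, which is exactly what guarantees a separating prime via the Chinese remainder theorem. I expect (i) to be the main obstacle, since it requires verifying that $P$ extracts no extra information in the first stage: its observations are constant, its own moves only trigger the self-loop at $S$, and the prime is not leaked before the public reveal.
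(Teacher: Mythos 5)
Your proof is correct and follows the same route as the paper's own argument: the Chinese remainder theorem combined with the fact that the prime is revealed only after $N$ is fixed, so a small memory cannot simultaneously track all residues. Your write-up is in fact more rigorous than the paper's --- the explicit pigeonhole collision among $m_0,\dots,m_{n\#-1}$ and the two-scenario adversary (pick a separating prime $p_i$, play $s$ either $a$ or $b$ times) supply exactly the indistinguishability reasoning that the paper compresses into the assertion that $P$ ``has to record all the remainders,'' and your argument delivers the stated bound $n\#$, whereas the paper's proof text ends with the mismatched weaker product $(p_1-1)(p_2-1)\cdots(p_n-1)$.
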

\begin{proof}
The Chinese-remainder theorem implies that $N$ is uniquely determined by its remainders against the set of primes $\{p_1,p_2,\dots ,p_n \}$. Player $F$ can choose any prime in $p_1, \dots ,p_n$ and this prime is revealed only after the number has been chosen.  Since player $P$ cannot know apriori the choice of the prime number $p$, he has to record all the remainders.  Therefore, he requires at least $(p_1 - 1)(p_2 - 1)\dots(p_n - 1)$ states.
\end{proof}

\begin{corollary}
To record $(p_1 - 1)(p_2 - 1)\dots(p_n - 1)$ states we need $\mathcal{O}(n)$ bits of memory. This game has $\mathcal{O}(n^2)$ states as $p_n < n^2$. To win in a game of size $\mathcal{O}(n)$, we need a memory with $\mathcal{O}(\sqrt{n})$ bits of memory.
\end{corollary}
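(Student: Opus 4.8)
The corollary is a bookkeeping translation of the preceding lemma (which established that $P$ needs at least $\prod_{i=1}^{n}(p_i-1)$ distinct memory states) into bit counts together with a statement about the game size. My plan is therefore to (i) convert the memory-state count into a bit count, (ii) bound the size of the arena of Figure~\ref{fig:lowerbound1}, and (iii) reparametrize by the game size. None of these steps needs any game-theoretic content beyond the lemma; they are elementary counting plus one change of variables.

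First, for the bit count: any memory object with $M$ distinct values can be addressed with $\lceil \log_2 M\rceil$ bits and needs at least that many, so the number of memory bits required is exactly $\lceil \log_2(\text{number of memory states})\rceil$. By the lemma this is at least $\log_2 \prod_{i=1}^{n}(p_i-1) = \sum_{i=1}^{n}\log_2(p_i-1)$. To see that this quantity is $\Theta(n)$ I would invoke the remark $n\# = \prod_{i=1}^{n}p_i = e^{(1+o(1))n}$, which gives $\log_2(n\#) = (1+o(1))\,n/\ln 2 = \Theta(n)$. Since $\prod_{i=1}^{n}(p_i-1)$ and $n\#$ differ only by the factor $\prod_{i=1}^{n}(1-1/p_i)$, whose logarithm is of lower order, the two have the same leading-order logarithm, so $\sum_{i=1}^{n}\log_2(p_i-1) = \Theta(n)$. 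In particular the number of bits is $\mathcal{O}(n)$, and, crucially for a lower bound, $\Omega(n)$.

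Second, for the game size: the arena in Figure~\ref{fig:lowerbound1} has $|V^{public}| = 2$ (the states $S$ and $F$), while $V^{private}$ consists of the $N$-counter row and the remainder-guessing row (each with at most $p_n$ states), the sink $X$, and the $n$ prime-choice states, for a total of $\mathcal{O}(p_n + n) = \mathcal{O}(p_n)$ private states. Using the remark $p_n < n^2$ then yields $|V| = |V^{public}|\cdot|V^{private}| = \mathcal{O}(n^2)$. Finally I combine the two estimates: the game built from $n$ primes has size $s = \mathcal{O}(n^2)$ yet forces $\Omega(n)$ memory bits; hence $n = \Omega(\sqrt{s})$ and the required memory is $\Omega(\sqrt{s})$ bits. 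Renaming the game size $s$ as $n$ (as the corollary statement does in its last sentence) gives the asserted $\Omega(\sqrt{n})$ lower bound on memory bits for a game of size $\mathcal{O}(n)$.

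The only genuine subtlety — and the step I would be most careful to state explicitly — is the number-theoretic estimate in the first paragraph: I must flag that it is the \emph{lower bound} direction $\Omega(n)$, not merely the upper bound $\mathcal{O}(n)$, that turns the corollary into a real memory lower bound, and that replacing $n\#$ by $\prod_{i=1}^{n}(p_i-1)$ leaves the leading asymptotics of the logarithm unchanged. Everything else is routine: a ceiling-of-log argument, a state count off the figure, and a single reparametrization $s = \Theta(n^2)$.
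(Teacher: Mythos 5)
Your proposal is correct and follows essentially the same route as the paper, which states this corollary without any separate proof precisely because it is the bookkeeping you carry out: take logarithms of the state count supplied by the preceding lemma, bound the arena size via the remark $p_n < n^2$, and reparametrize the game size $s = \Theta(n^2)$ to get $\Omega(\sqrt{s})$ bits. Your explicit care about the lower-bound direction $\Omega(n)$ (rather than merely $\mathcal{O}(n)$) and about the harmless discrepancy between $\prod_{i=1}^{n}(p_i-1)$ and $n\#$ only makes precise what the paper leaves implicit.
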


\section{Conclusion}

To our best knowledge, this paper offers the first study where \emph{limited memory} is used to capture the non-adversarial nature of the opponent. Moreover, the opponent is also \emph{forgetful}.
We show that the existence of winning strategies for the partial-information player is an NP-complete problem. On the other hand, the existence of winning strategies for the full-information player is a PSPACE-hard problem. Such games played against a weak opponent are asymmetric for the two players, as observed from the complexity results. As expected, limiting the memory available to the opponent introduces new winning strategies. However, these strategies are harder to find when compared to the traditional setting. Surprisingly, the worst-case complexity for determining the existence of winning strategies is independent of the type of the winning condition. 
However, it is not clear if it is PSPACE-complete.


\clearpage
\nocite{Gradel2002}
\nocite{Sipser:1996:ITC:524279}
\nocite{Thomas2002Jul}
\nocite{Martin1975Sep}
\nocite{Henzinger2007Jan}
\bibliography{ref}

\clearpage

\appendix

\section{Proofs}

\begin{claimproof}[Proof of Lemma \ref{winp:hard}]
($\Rightarrow$) If $\phi$ is satisfiable, then there is an assignment to all the variables that satisfies every clause, player $P$ simply chooses this assignment. Player $F$ can never play the $n$-move as it will take him to a losing state.
\\
($\Leftarrow$) If $\phi$ is unsatisfiable, then no matter the assignment chosen by player $P$, there is a clause $k$ that is not satisfied. Player $F$ can play the $n$-move in this clause to win the game.
\end{claimproof}

\begin{claimproof}[Proof of Lemma \ref{winp:safety}]
When the formula $\phi$ is not satisfiable, the game reaches a state of the form $(*,(C_k,\bot))$. In this state, player $F$ can play the $n$ move to defeat player $P$. The next lemma formalizes this.
\end{claimproof}

\begin{claimproof}[Proof of Lemma \ref{lemma:general_bound}]
Proof is by induction on $|V^{public} \times V^{private}|$. Suppose $V^{public}  = V^{private} = \emptyset$, then all of player $P$'s moves are fixed. Hence if player $P$ can win, it can win in $|V_3|$ moves. The start state of the game can be in either 
\begin{enumerate}
    \item{$V^{public} \times V^{private}$:} Say the game starts in $(v_1,v_2)$. After the first round player $P$'s moves with respect to $\{v_1\} \times V^{private}$ are fixed for the rest of the game. Hence, we set $V_3 = V_3 \cup (\{v_1\}\times V^{private})$ and $V^{public} = V^{public} \setminus \{v_1\}$. Thus, player $F$ wins in $1 + (|V_3| + |V^{private}| + (|V^{public}| - 1)|V^{private}|)(|V^{public}| - 1 + 1) \leq (|V_3| + |V^{public}||V^{private}|) (|V^{public}|+1)$.
    \item $V_3$: If player $F$ can win the game by staying inside $V_3$, then it can win the game in $|V_3|$ moves. Otherwise player $F$ has to escape the $V_3$ sub-graph in at most $|V_3|$ moves as player $P$'s moves in $V_3$ are known, else it remains stuck in $V_3$. Say it escapes to $(v_1,v_2)$, using the previous case player $F$ wins in $\leq |V_3| + (|V_3| + 1 + (|V_3| + |V^{private}| + (|V^{public}| - 1)|V^{private}|)(|V^{public}| - 1 + 1) \leq (|V_3| + |V^{public}||V^{private}|) (|V^{public}|+1)$ moves. 
\end{enumerate}
\end{claimproof}

\begin{claimproof}[Proof of Lemma \ref{lemma:winf_safe}]
($\Rightarrow$) 
If the formula $\psi$ is satisfiable, for any assignment to a variable $x_i$ by player $P$, player $F$ has a suitable response. The model for the formula is also a winning strategy for player $F$. Observe that the partial-information player cannot perform the exit move $c$ as this will directly take him to a winning state of player $F$ (by construction rules 1 and 2 of the transition).
\\
($\Leftarrow$) We prove the contrapositive. We show that if $\psi$ is not satisfiable, then player $F$ does not have a winning strategy. Since $\psi$ is not satisfiable, the player $P$ has a winning strategy in the standard game interpretation. This implies in any path traversed in the decision tree, there is a clause in $\psi$ that is not satisfied. Let $p$ be one such path. 

Suppose if $C_1$ evaluates to $\bot$ in $p$, then some state $(\set{y}{n}{*},(C_1,\bot))$ is reached. Further by rule $7$ this always leads to a state where there is no transition out for player $F$ (it does not have a path to any winning state). If this is not the case, there is some clause $C_k$ that is not satisfiable under $p$. If player $F$ plays the same assignments for every clause under $p$, by the same argument above it would reach a dead end. Hence player $F$ has to necessarily make a different choice for some $\exists$ variable $y_j$. Let $c_l$ $(1< l \leq k)$ be the clause where it deviates from $p$ for the 1st time. If player $P$ responds by making the exit move (Note that it can do this as per a memoryless strategy since it is the first time it is encountering this vertex in the visible component of the game) irrespective of player $F$'s response the game hits a dead end as per rule 2. 
\end{claimproof}

\end{document}